\newtheorem{theorem}{Theorem}[section]
\newtheorem{lemma}[theorem]{Lemma}
\newtheorem{corollary}[theorem]{Corollary}
\theoremstyle{remark}
\newtheorem{remark}{Remark}
\newtheorem{definition}{Definition}[section]
\newcommand\nc\newcommand
\nc\bfa{{\boldsymbol a}}\nc\bfA{{\boldsymbol A}}\nc\cA{{\EuScript A}}
\nc\bfb{{\boldsymbol b}}\nc\bfB{{\boldsymbol B}}\nc\cB{{\mathscr B}}
\nc\bfc{{\boldsymbol c}}\nc\bfC{{\boldsymbol C}}\nc\cC{{\mathscr C}}
\nc\bfd{{\boldsymbol d}}\nc\bfD{{\boldsymbol D}}\nc\cD{{\EuScript D}}
\nc\bfe{{\boldsymbol e}}\nc\bfE{{\boldsymbol E}}\nc\cE{{\EuScript E}}
\nc\bff{{\boldsymbol f}}\nc\bfF{{\boldsymbol F}}\nc\cF{{\mathscr F}}
\nc\bfg{{\boldsymbol g}}\nc\bfG{{\boldsymbol G}}\nc\cG{{\EuScript G}}
\nc\bfh{{\boldsymbol h}}\nc\bfH{{\boldsymbol H}}\nc\cH{{\mathcal H}}
\nc\bfi{{\boldsymbol i}}\nc\bfI{{\boldsymbol I}}\nc\cI{{\mathcal I}}
\nc\bfj{{\boldsymbol j}}\nc\bfJ{{\boldsymbol J}}\nc\cJ{{\EuScript J}}
\nc\bfk{{\boldsymbol k}}\nc\bfK{{\boldsymbol K}}\nc\cK{{\EuScript K}}
\nc\bfl{{\boldsymbol l}}\nc\bfL{{\boldsymbol L}}\nc\cL{{\EuScript L}}
\nc\bfm{{\boldsymbol m}}\nc\bfM{{\boldsymbol M}}\nc\cM{{\EuScript M}}
\nc\bfn{{\boldsymbol n}}\nc\bfN{{\boldsymbol N}}\nc\cN{{\EuScript N}}
\nc\bfo{{\boldsymbol o}}\nc\bfO{{\boldsymbol O}}\nc\cO{{\EuScript O}}
\nc\bfp{{\boldsymbol p}}\nc\bfP{{\boldsymbol P}}\nc\cP{{\EuScript P}}
\nc\bfq{{\boldsymbol q}}\nc\bfQ{{\boldsymbol Q}}\nc\cQ{{\EuScript Q}}
\nc\bfr{{\boldsymbol r}}\nc\bfR{{\boldsymbol R}}\nc\cR{{\EuScript R}}
\nc\bfs{{\boldsymbol s}}\nc\bfS{{\boldsymbol S}}\nc\cS{{\EuScript S}}
\nc\bft{{\boldsymbol t}}\nc\bfT{{\boldsymbol T}}\nc\cT{{\EuScript T}}
\nc\bfu{{\boldsymbol u}}\nc\bfU{{\boldsymbol U}}\nc\cU{{\EuScript U}}
\nc\bfv{{\boldsymbol v}}\nc\bfV{{\boldsymbol V}}\nc\cV{{\mathscr V}}
\nc\bfw{{\boldsymbol w}}\nc\bfW{{\boldsymbol W}}\nc\cW{{\mathscr W}}
\nc\bfx{{\boldsymbol x}}\nc\bfX{{\boldsymbol X}}\nc\cX{{\EuScript X}}
\nc\bfy{{\boldsymbol y}}\nc\bfY{{\boldsymbol Y}}\nc\cY{{\mathscr Y}}
\nc\bfz{{\boldsymbol z}}\nc\bfZ{{\boldsymbol Z}}\nc\cZ{{\EuScript Z}}
\nc\rr{{\mathbb R}}
\nc\ee{{\mathbb E}}
\nc\sS{{\mathcal S}}
\nc{\integers}{{\mathbb Z}}
\nc{\Z}{{\mathbb Z}}
\nc{\ff}{{\mathbb F}}
\nc{\ii}{{\mathbb I}}
\nc{\sC}{{\mathfrak C}}
\nc{\sL}{{\mathfrak L}}
\nc\hH{{\mathsf H}}
\nc\gG{{\mathsf G}}
\nc{\remove}[1]{}
\DeclareSymbolFont{bbold}{U}{bbold}{m}{n}
\DeclareSymbolFontAlphabet{\mathbbold}{bbold}
\DeclareMathOperator{\wdp}{WDP}
\DeclareMathOperator{\lpn}{LPN}
\DeclareMathOperator*{\argmin}{\arg\!\min}
\DeclareMathOperator{\negl}{negl}
\DeclareMathOperator{\bias}{bias}
\DeclareMathOperator{\poly}{poly}
\DeclareMathOperator{\Ber}{\rm Ber}
\nc{\dtv}{d_{\text{\rm TV}}}
\nc{\ion}{\{0,1,\dots,n\}}
\newcommand{\genstirlingI}[3]{%
  \genfrac{[}{]}{0pt}{#1}{#2}{#3}%
}
\newcommand{\stirlingI}[2]{\genstirlingI{}{#1}{#2}}
\newcommandx{\rednote}[2][1=]{\todo[linecolor=red,backgroundcolor=red!25,bordercolor=red,#1]{#2}}
\newcommandx{\bluenote}[2][1=]{\todo[linecolor=blue,backgroundcolor=blue!25,bordercolor=blue,#1]{#2}}
\newcommandx{\yellownote}[2][1=]{\todo[linecolor=yellow,backgroundcolor=yellow!25,bordercolor=yellow,#1]{#2}}
\newcommandx{\greennote}[2][1=]{\todo[inline,linecolor=olive,backgroundcolor=green!25,bordercolor=olive,#1]{#2}}
\newcommand{\new}[1]{{\color{black} #1}}
\newcommand\redout{\bgroup\markoverwith{\textcolor{red}{\rule[0.5ex]{2pt}{0.8pt}}}\ULon}
\title[]{Limitations of the decoding-to-LPN reduction via code smoothing}
\author[]{Madhura Pathegama}\thanks{The authors are with the Department of ECE and Institute for Systems Research, University of Maryland, College Park, MD 20742. Emails: \{pankajap,abarg\}@umd.edu.}
\author[]{Alexander Barg}
\begin{document}

\begin{abstract} The Learning Parity with Noise (LPN) problem underlines several classic cryptographic primitives. Researchers have attempted to demonstrate the algorithmic hardness of this problem by finding reductions from the decoding problem of linear codes, for which several hardness results exist. Earlier studies used code smoothing as a tool to achieve reductions for codes with vanishing rate. This has left open the question of attaining a reduction with positive-rate codes. Addressing this case, we characterize the efficiency of the reduction in 
terms of the parameters of the decoding and LPN problems. As a conclusion, we isolate the parameter regimes
for which a meaningful reduction is possible and the regimes for which its existence is unlikely.
\end{abstract}

\maketitle

\section{Introduction}

The Learning Parity with Noise (LPN) problem underlies several classic cryptographic primitives, including symmetric 
encryption \cite{hopper2001secure,juels2005authenticating},  public-key cryptography \cite{alekhnovich2003more}, and 
collision-resistant hashing \cite{brakerski2019worst,yu2019collision}. 
For this reason, the problem of computational complexity of LPN has attracted
much attention in the literature. One line of work to show the hardness of this problem suggests to construct a reduction from the decoding problem of linear codes, arguing therefore that LPN is at least as hard as 
decoding generic codes. Reductions of this kind known to date
\cite{brakerski2019worst,yu2021smoothing,debris2022worst} rely on the technique known as code smoothing. To 
establish their claims, the authors of these works had to make certain assumptions about the parameters of 
code sequences involved in the constructions. Specifically, meaningful reductions have 
been established only for code sequences of asymptotically vanishing rate, whereas the regime of positive rates so far has proved beyond reach. While recent works have focused on improved reductions, in this paper we 
address a different possibility, namely that such reductions are not possible. 

The LPN problem was proposed in \cite{blum1993cryptographic} in a cryptographic context and was the subject of many subsequent works. Algorithms for solving it were designed in \cite{blum2003noise,lyubashevsky2005parity} among others. In the search (or computational) version of the LPN problem, the goal is to recover a uniform random secret $m\in \ff_2^k$ with high probability from ${N}$ samples of the form $(a_i, a_i^\intercal m +b_i)_{i=1}^{N},$ where the vectors $a_i$ are sampled uniformly and independently from $\ff_2^k$ and $b_i\sim\Ber(\delta)$ are independent Bernoulli random bits (we use column vectors throughout). We call the parameter $\delta$ the {\em noise rate} of the corresponding LPN problem. 

Studies into the hardness of LPN are motivated by an analogous computational task in the domain of lattice
cryptography called {\em Learning with Errors} (LWE)  \cite{regev2005lattices, micciancio2007worst}. In this
problem, one wants to recover a secret $m$ from a set of samples of the form $(a_i, a_i^{\intercal}m +b_i)$,
where the vectors $(a_i)_i$ are sampled from a large enough finite field $\ff_q^n$, while $(b_i)_i$ are sampled from a discrete Gaussian distribution. The primary technique employed in reducing worst-case lattice problems to LWE is lattice smoothing \cite{micciancio2007worst,debris2023smoothing}. By analogy, it is reasonable to expect that the worst-case decoding problems could similarly be reduced to the average-case LPN problem based on code smoothing, thereby supporting the hardness assumption.

Code smoothing can be informally described as follows. Given a code $\cC\subset \ff_2^n$, we define a probability distribution $P_\cC$ on $\ff_2^n$ by normalizing the indicator function $\1_\cC$. Applying a noise operator $T$ to $P_{\cC}$, we obtain
another distribution. We say that $T$ smooths the code $\cC$ if $TP_\cC$ is close to the uniform distribution on $\ff_2^n$. A more general version of this problem seeks to approximate any given ``output'' probability distribution $P$ and is known in information theory as channel resolvability \cite{han1993approximation}, with numerous applications in information-theoretic problems. We refer to \cite{pathegama2023smoothing} for an overview of the literature as well as some recent results on code smoothing outside the context of this paper.

For a given $[n,k,d]$ linear code $\cC$ with generator matrix $\gG\in \ff_2^{k\times n}$ and a nonnegative integer $w\leq n$, 
the worst-case decoding problem involves identifying the message $m$ from a noisy codeword $\gG^{\intercal}m +e$, with the promise that the Hamming weight of the error satisfies $|e|= w$. The hardness of this problem was discussed in \cite{downey1999parametrized}, and it was further shown in \cite{dumer2003hardness} it is NP-hard under the condition $w > d/2$. The non-promise version of the decoding problem is NP-hard even if one seeks to approximate the exact solution \cite{arora1997hardness}. 

Before proceeding, let us make our expectations of the decoding-to-LPN reduction more explicit. Clearly, if the noise rate $\delta=1/2,$ the LPN problem is not solvable. At the same time, symmetric cryptography
becomes possible if $\delta=1/2-1/\text{poly}(k)$ \cite{brakerski2019worst}, so we call a reduction
{\em meaningful} if it yields an LPN problem with this or smaller noise rate.

The general idea behind the reduction of decoding to LPN using smoothing is as follows. For a given linear code 
$\cC$ with generator matrix $\gG$, and an unknown error vector $e$ of weight  $w$, the goal is to find a 
distribution $P$ on $\ff_2^n$ so that sampling vectors $Z$ out of $P$ yields nearly uniform vectors $\gG Z$, 
while $e^{\intercal}Z$ is a biased Bernoulli random variable that is nearly independent of $\gG Z$. 
As will be seen below in the main text, finding such a distribution $P$ is equivalent to the smoothing.
Previous works employed certain explicit distributions for generating the random vectors $Z$ and used Fourier-analytic 
arguments or the leftover hash lemma \cite{impagliazzo1989pseudo} to demonstrate the effectiveness of these 
choices.

 The first paper devoted to the decoding-to-LPN reduction problem was the work of
Brakersky \emph{et al.} \cite{brakerski2019worst}, followed by Yu and Zhang \cite{yu2021smoothing}. Using code smoothing, these papers show that LPN is at least as hard as solving a worst-case decoding problem. To claim this result, their authors have to make certain assumptions (explicit or implicit) that lead to limitations on the parameters of codes in the decoding part of the reduction. In particular, the
reductions in \cite{brakerski2019worst,yu2021smoothing} become efficient once we require that the
rate $k/n$ of the codes vanish as $n$ increases. Additionally
the results in \cite{brakerski2019worst} rely on the assumption that the Hamming weights of the non-zero codewords are located between $l$ and $n-l$ for some $l$ (the authors call such codes {\em balanced}). The results of \cite{yu2021smoothing} require either balanced codes or certain independence conditions for the coordinates of the code. More recently, Debris-Alazard and Resch \cite{debris2022worst} presented a more general framework for utilizing code smoothing to reduce the decoding problem to the LPN problem. Their proof relies upon the work in \cite{debris2023smoothing} on lattice and code smoothing and does not involve any
restrictions on the weights of the code, but their reduction is also limited by the vanishing rate assumption.

It remained unclear whether the limitation of zero rate stemmed from the particular smoothing distribution 
used in the cited works or from inherent issues within the smoothing approach itself. For example, in all 
prior studies \cite{brakerski2019worst,yu2021smoothing,debris2022worst}, the authors employed a single 
universal smoothing distribution for all codes with the same parameters. However, in the reduction, this
limitation is not 
essential; instead, distributions can vary depending on the specific codes being considered. Addressing this problem, we prove that for constant-rate codes, achieving a meaningful reduction with constant relative error weight $w/n$ is impossible regardless of the smoothing distribution. This result relies on the following qualification: we require that the vectors $\gG Z$ (syndromes of the dual code) closely approximate uniformity, more formally, that the distance between their distribution and the uniform distribution on the space declines faster than $\poly(k)$. Allowing a slower approach to uniformity makes the reduction possible, although the parameter regime associated with it
degrades the performance of the decoder implied by the LPN solver; see Sec.~\ref{sec:reduction} for a more detailed discussion. 

Our results can be interpreted as follows. The potential reduction depends on three parameters, namely
the success probability of the LPN solver $\alpha$, the sample complexity of the solver $N$, and the
smoothing parameter $\varepsilon.$ The success probability of the decoding task accomplished via the smoothing reduction is then expressed as $\alpha-N\varepsilon$.
If the sample complexity scales as $N=2^{n^c}, 0<c<1$, then $\varepsilon$ has to be small to support the
decoding. In this regime, we show that meaningful reductions for constant-rate codes are impossible. On the other hand, if $N$ has slower growth, then a reduction is possible, but the decoder performance
is degraded.

A line of research closely related to the LPN problem addresses average-case decoding of linear codes. Two 
variants of the average-case decoding problem have been studied in the literature 
\cite{debris2022worst,bombar2023pseudorandomness}. Given a random linear code and a fixed error weight, the {\em 
search version} of the average-case decoding problem seeks to recover the original codeword from a given noisy 
codeword with high probability. The {\em decision version} of this problem seeks to resolve whether a given 
vector is drawn from a uniform distribution or from a distribution that results from adding noise to a uniform 
random codeword. The authors of \cite{debris2022worst} reduced the worst-case decoding problem to the search 
version of the average-case decoding problem. Similarly, in \cite{bombar2023pseudorandomness}, a reduction was 
obtained from the worst-case decoding problem to the decision version of the average-case decoding problem. 
In both cases, a crucial step was to reduce the worst-case decoding problem to the LPN problem, with code 
smoothing playing a pivotal role. Therefore, both approaches achieved meaningful reductions only for zero-rate 
codes. Since both results depend on reductions from worst-case decoding to LPN via code smoothing, our work suggests that generalizing them to positive-rate codes faces conceptual challenges and is unlikely.

\section{The worst-case decoding problem and the LPN problem}

\subsection{Notation}\label{sec:notation}
We consider linear codes $\cC\subset \ff_2^n$ of length $n$. For a random vector $Z$ on $\ff_2^n$ 
we denote by $P_Z$ its probability mass function (pmf). Given a pmf $P$ on
$\ff_2^n$, we write $Z\sim P$ to indicate that $Z$ is distributed according to $P$; hence $P_Z=P$. 
In particular, $P_{U_n}$ denotes the uniform distribution on $\ff_2^n$, where $U_n$ is the uniform random vector on $\ff_2^n$. For a code $\cC \subset \ff_2^n$,  $P_{\cC}$ denotes the uniform distribution on it and $X_{\cC}$ is a uniform random codeword of $\cC$. 
The notation $P_{\Ber(\delta)}$ refers to the pmf of a Bernoulli random variable $\xi$ with $P_\xi(1)=\delta$. \new{We often use the quantity
  $$
  \bias(\xi):=1/2-\delta
  $$
that measures the deviation of $\xi$ from a uniform Bernoulli random variable.} 

Given two probability measures $P$ and $Q$ defined on $\ff_2^n$,
define the {\em total variation distance} (TV-distance) as 
\begin{align}
    \dtv(P,Q) = \max_{A\subset \ff_2^n}|P(A)-Q(A)|. \label{eq: dtv-max}
\end{align}
It is well known that $\dtv(P,Q)$ can be written as
\begin{align}
    \dtv(P,Q) = \frac{1}{2}\sum_{x \in \ff_2^n}|P(x)-Q(x)|. \label{eq:dtv}
\end{align}

Following the usage in previous works, we say that a function is {\em negligible}, $f(k) = \negl(k)$, if $f(k) = o(k^{-a})$ for all $a>0$ as $k\to\infty$.

Let us formally define the worst-case decoding problem and the LPN problem. 
\begin{definition}
    The worst-case decoding problem $\wdp(n,k,w)$ is defined as follows: Given 
        \begin{enumerate}[label=\normalfont(\arabic*)]
            \item a matrix $\gG \in \ff_2^{k \times n}$
            \item  a vector $y\in \ff_2^n$ of the form $y=\gG^{\intercal} m' + e'$ for some $m' \in \ff_2^{ k}$ and $e' \in \ff_2^n$ with $|e'| = w$,
        \end{enumerate}
find an $m$ such that $y=G^\intercal m+e$ for some $e\in \ff_2^n$ with $|e| = w$. 
\end{definition}

\new{For brevity, below we write the product distribution $P_{U_k}{\times} P_{\Ber(\delta)}$ as $P_{U_k}P_{\Ber(\delta)}$.}

\begin{definition}
    The LPN problem $\lpn(k,\delta,{N},\alpha)$ with \emph{noise rate} $\delta\in (0,1/2)$, \emph{sample complexity} ${N}$, and \emph{success probability} $\alpha$ is defined as follows: Given a collection of samples $(a_i, a_i^{\intercal} m +b_i)_{i=1}^{N}$, $a_i,m\in\ff_2^k, b_i\in \ff_2$, where 
        \begin{enumerate}[label=\normalfont(\arabic*)]
            \item $m \sim P_{U_k}$ is fixed across all samples 
            \item $(a_i,b_i) \sim P_{U_k}P_{\Ber(\delta)}$ are chosen independently for each sample,
        \end{enumerate}
find $\hat m$ with $\Pr(\hat m=m)\ge \alpha$.
\end{definition}

\subsection{WDP-to-LPN reduction}\label{sec:reduction}
In this section, we discuss the reduction from the worst-case decoding problem to LPN both in terms 
of the procedure that implements it and the assumptions for the parameters of the problem
needed to prove hardness of LPN based on decoding. \new{ The reduction procedure presented below was introduced
in the  LWE case in \cite{regev2005lattices} and adapted to codes and LPN in \cite{brakerski2019worst}.}

    Let $\mathcal{A}$ be an algorithm that solves $\lpn(k,\delta,{N},\alpha)$ in time $T$. Our goal is to show that this algorithm solves the decoding problem. 
    Let $\gG$ be a $k \times n$ generator matrix $\gG$ of a code $\cC$ and a vector (noisy codeword) 
    $\gG^{\intercal}m+e$, where the noise vector $e$ satisfies $|e|= w$ 
 In order to find $m$ we proceed as follows:

\begin{enumerate}[label=\normalfont(\alph*)]
    \item Select a random vector $m'$ uniformly from $\ff_2^k$.
    \item \label{it: 2} Find a distribution $P$ on $\ff_2^n$  and a parameter $\varepsilon>0$ such that\footnote{
The probability distribution $P_{\gG Z,e^{\intercal}Z}$ is given by
    $ P_{\gG Z,e^{\intercal}Z}(x,y)=\sum_{\begin{substack}{z: \gG z=x,\\ e^\intercal z=y}\end{substack}}P_Z(z),$
and similarly for other distributions of this kind.}
    \begin{align}\label{eq: sm_red}
        \dtv(P_{\gG Z,e^{\intercal}Z},P_{U_k}P_{\Ber(\delta)}) \leq \varepsilon,
    \end{align}
    where $Z\sim P$.

    \item Generate samples $\{Z_i\}_{i=1}^{N}$ independently from $P$, and define $a_i = \gG Z_i $ and $b_i = e^{\intercal}Z_i$. Note that
    $$
    Z_i^{\intercal}(\gG^{\intercal} m' + \gG^{\intercal}m + e) = a_i^{\intercal}(m+m')+b_i.
    $$
    \item The set of pairs $(a_i,a_i^{\intercal}(m+m') + b_i)_{i=1}^{N}$ is submitted as input to Algorithm $\mathcal{A}$.
    \item If ${N}\varepsilon < \alpha$, Algorithm $\mathcal{A}$ outputs $m+m'$ with success probability at least $\alpha - {N}\varepsilon$ in time $T$.
    \item In conclusion, with probability $\alpha - {N}\varepsilon$ the message $m$ is found in time $T\!\cdot\poly(n,k)$.
\end{enumerate}   

Note that WDP is guaranteed to have a unique solution only if $w\le (\text{dist}(\cC)-1)/2$. If it does not, then the LPN problem obtained from it as described above recovers some solution to WDP with probability $\alpha-N\varepsilon$. Note also that the parameters $\delta, N,\alpha$ of the LPN problem are so far not 
related to $w$ and $n$. The interdependence of these parameters will be introduced through properties \eqref{eq:a}-\eqref{eq:b} of the reduction discussed below.

Hypothetically, if we could find $P$ such that for a vector $Z$ with $P_Z=P$,  $P_{\gG Z,e^{\intercal}Z}=P_{U_k}P_{\Ber(\delta)}$, it would imply that $\gG Z$ and $e^{\intercal}Z$ are 
distributed as a product of a uniform distribution over $\ff_2^k$ and a Bernoulli distribution. 
This ideal scenario would allow us to generate perfect inputs $(a_i, a_i(m+m')+b_i)_i$ for 
the LPN solver, ensuring that $(a_i)_i$ are uniformly distributed and independent from $(b_i)_i$.
Since this requirement generally cannot be satisfied, we relax this condition to \eqref{eq: sm_red}. This relaxation degrades the performance of the decoder from $\alpha$ to $\alpha - {N}\varepsilon$. This follows by combining \eqref{eq: sm_red} with the definition of the total variation distance \eqref{eq: dtv-max}. Indeed, let $\cE$ be the success event of the LPN solver, then the difference 
$|P_{\gG Z,e^{\intercal}Z}(\cE)-P_{U_k}P_{\Ber(\delta)}(\cE)|\le \varepsilon$, and the conclusion in Step (e) follows by the union bound.

\subsection{\new{Meaningful reductions}} The usefulness of a particular instantiation of the reduction depends on two parameters,
the distance $\dtv(P_{\gG Z,e^{\intercal}Z},P_{U_k}P_{e^{\intercal}Z})$ and the bias of the bit $e^\intercal Z$.
First, an efficient algorithm for LPN results in an efficient probabilistic decoding procedure for a generic linear code. If LPN can be solved with a low-complexity procedure
and if $\dtv(P_{\gG Z,e^{\intercal}Z},P_{U_k}P_{e^{\intercal}Z})$ is small, then the decoder
succeeds with essentially the same probability as the LPN solver.

At the same time, the smaller the bias, the harder the LPN problem becomes. Indeed, if $\bias(e^TZ)=0$, then the observation is pure noise, and so the LPN problem is essentially unsolvable. 
Judging by this boundary case, we can expect that if $\bias(e^{\intercal}Z) = \negl(k)$ the
problem is computationally hard, and hence of less theoretical interest. 
Furthermore, there are no known cryptographic implications for this parameter regime, while $\bias(e^{\intercal}Z) = O(1/\poly(k))$ implies symmetric cryptography \cite{brakerski2019worst}. 

Therefore, a {\em meaningful reduction} should satisfy the following two conditions:
\begin{subequations}
    \begin{equation}\label{eq:a}
     \dtv(P_{\gG Z},P_{U_k})<\alpha/ {N},   
    \end{equation}
    \begin{equation}\label{eq:b}
     \bias(e^{\intercal}Z) = \Omega\Big(\frac1{\poly(k)}\Big).  
    \end{equation}
\end{subequations}
Here condition \eqref{eq:a} essentially requires that random syndromes of the dual code $\cC^\bot$ be approximately uniform. It arises because
\begin{align*}
    \dtv(P_{\gG Z},P_{U_k}) 
    \leq \dtv(P_{\gG Z,e^{\intercal}Z},P_{U_k}P_{e^{\intercal}Z}) \leq \varepsilon < \alpha/{N},
\end{align*}
where the first inequality holds because for any $X,Y,Z$, 
   \begin{multline*}
   \dtv(P_{XY},P_{ZY})=\frac12\sum_{x,y}|P_{XY}(x,y)-P_{ZY}(x,y)|\\ \ge 
   \frac12\sum_x \Big|\sum_y(P_{XY}(x,y)-P_{ZY}(x,y))\Big|= \dtv(P_X,P_Z),
   \end{multline*}
 and the  
   last inequality holds because the probability of a successful reduction $\alpha-{N}\varepsilon$ must be strictly positive.  

The central question is whether objectives \eqref{eq:a}-\eqref{eq:b} can be achieved simultaneously. In this paper, we show that the answer to this 
question depends on how closely the syndromes approximate uniformity. In Section~\ref{sec:positive-rate}, we show that if $\dtv(P_{\gG 
Z},P_{U_k})=\negl(k)$, then for codes of positive rate and for constant relative error weight, the answer is negative, namely $\bias(e^{\intercal}Z)$
declines faster than any polynomial in $k$. We further show that to satisfy condition \eqref{eq:b} with $\dtv(P_{\gG Z},P_{U_k})=\negl(k)$ the error weight necessarily has to be small, namely $O(\log k)$. This shows that for codes of constant rate and small error weight, the decoding problem can be efficiently reduced to the LPN problem. Of course, since decoding from a low-weight error is easy, this reduction does not contribute to the hardness proof of LPN.

With constant error weight, the only scenario where we can expect a meaningful reduction is when $\dtv(P_{\gG Z},P_{U_k}) = \Omega(1/\poly(k))$. This case is addressed in Sec.~\ref{eq:large distance} where we show that such a reduction is achievable. At the same time, this parameter regime has an adverse effect on the accuracy of the decoder. 


\subsection{Fourier transform and Krawtchouk polynomials}
The Fourier transform of a function $f:\ff_2^n \rightarrow \rr$ is defined as follows:
\begin{equation}\label{eq:ft}
 \widehat{f}(y) = \frac{1}{2^n}\sum_{x\in {\ff_2^n}} f(x)(-1)^{x^\intercal y}, \quad y\in\ff_2^n.
\end{equation}
The inversion formula is given by   
\begin{equation}\label{eq:ift}
 f(x) = \sum_{y\in {\ff_2^n}} \widehat{f}(y)(-1)^{x^\intercal y}, \quad x\in\ff_2^n.
\end{equation}
We will also use the Fourier transform of the convolution $(f\ast g)(x):=\sum_{y} f(y)g(x-y)$, given by
\begin{equation}\label{eq:ft conv}
    (\widehat{f\ast g})({y})  = 2^n\widehat{f}({y})\widehat{g}({y}) 
\end{equation}
and Parseval's formula 
\begin{equation}\label{eq:ft Parceval}
    \frac{1}{2^n}\sum_{x}f(x)g(x) = \sum_{y}\widehat{f}({y})\widehat{g}({y}). 
\end{equation}

Denote by $S(x,w)$ the sphere centered at $x$ with radius $w$ in the Hamming space. 
The Fourier transform of the indicator function of the sphere is given by $\widehat\1_{S(0,w)}=\frac1{2^n}K_w,$
  where
    $K_w(x)= {K}_w^{(n)}(x) = \sum_{j=0}^w(-1)^j\binom{|x|}{j}\binom{n-|x|}{w-j}$ is a Krawtchouk polynomial of degree $w$. Conversely, we have
    \begin{equation}\label{ft: sphere}
     \widehat K_w(x)=\1_{S(0,w)} (x), \quad x\in \ff_2^n  .
    \end{equation}

If a function $f:\ff_2^n \rightarrow \rr$ depends only on the Hamming weight of its argument, we call it {\em radial}. Instead of $\ff_2^n,$ we sometimes write the domain of a radial function as
$\{0,1,\dots,n\}$. With some abuse of notation, we also write $f(|x|)$ instead of $f(x)$, for instance, $K_w(i)$ rather than $K_w(x)$ if $|x|=i$.

Several bounds on the magnitude of $K_t(i)$ are known in the literature. We will use the following simple universal estimate.
\begin{lemma}\label{lem: kbound}\cite[Lemma 2.4]{polyanskiy2019hypercontractivity}
    There exist constants $c\in(0,1)$ and $C\ge 1$ such that for all $0\le w \leq c n$ and $0\le i \leq n/2$,
    \begin{align}\label{eq: Kbound0}
        \frac{|K_w(i)|}{\binom{n}{w}} \leq C\Big(1-\frac{2w}{n}\Big)^i.
    \end{align}
For $n\ge 300$ it suffices to take $C=1$ and $c=0.16$.
\end{lemma}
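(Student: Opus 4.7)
My plan is to estimate the coefficient via Cauchy's integral formula applied to the generating function identity
\[
K_w(i) = [z^w]\,(1-z)^i(1+z)^{n-i}.
\]
Writing
\[
K_w(i) = \frac{1}{2\pi i}\oint_{|z|=r}\frac{(1-z)^i(1+z)^{n-i}}{z^{w+1}}\,dz,
\]
I would choose the radius $r \in (0,1)$ by a saddle-point calculation for the log-integrand, which produces a quadratic equation in $z$ whose relevant root $r^*$ lies strictly inside the unit disk and strictly away from $1$ whenever $w \le cn$ with $c$ bounded away from $1/2$ and $i \le n/2$. The key structural feature in this regime is that the exponential growth of the integrand on the circle $|z|=r^*$ is controlled by its value on the positive real axis.

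The first step is to show that on the contour $z = r^* e^{i\theta}$, the modulus of the integrand is maximized at $\theta = 0$. Using $|1\pm r e^{i\theta}|^2 = 1 \pm 2r\cos\theta + r^2$, one writes the log-modulus as $i\log(1 - 2r\cos\theta + r^2) + (n-i)\log(1 + 2r\cos\theta + r^2)$, and its $\theta$-derivative has sign dictated by the factor $(n-2i)$, giving the maximum at $\theta=0$ when $i \le n/2$. Evaluating at $\theta = 0$ produces the closed form $(1-r^*)^i(1+r^*)^{n-i}/(r^*)^{w+1}$; after inserting the explicit $r^*$ and comparing with Stirling's approximation for $\binom{n}{w}$, the factor $(1-2w/n)^i$ appears exactly, along with the exponential part of $\binom{n}{w}$ and a residual $\sqrt{n}$ overhead coming from the crude max-on-contour step.

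To shave off that $\sqrt{n}$ and recover the clean inequality stated in the lemma, I would then carry out a Laplace (saddle-point) expansion of the integral. The quadratic term in the Taylor expansion of the log-integrand around $\theta=0$ has coefficient of order $\Theta(n)$ in the regime $w \le cn$, $i \le n/2$, so the Gaussian contribution to the integral generates a factor of order $1/\sqrt{n}$ that precisely cancels the Stirling $\sqrt{n}$ hidden in $\binom{n}{w}$. What remains is the target bound $|K_w(i)| \le C\binom{n}{w}(1-2w/n)^i$ with an absolute constant $C$.

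The main obstacle is quantitative rather than qualitative: extracting the explicit constants $C=1$ and $c=0.16$ valid down to $n \ge 300$ demands explicit, finite-$n$ control of both the tail contribution to the contour integral outside an $O(1/\sqrt{n})$ window around $\theta=0$ and of the Taylor remainder inside that window. The threshold $c=0.16$, comfortably below $1/2$, is what keeps $r^*$ uniformly away from $1$ and makes both error terms tractable; the boundary case $n=300$ would presumably be handled by direct numerical verification. A shorter but weaker-looking probabilistic route is to use the representation $K_w(i)/\binom{n}{w} = \ee(-1)^W$ for $W \sim \mathrm{Hypergeom}(n,w,i)$ and compare its characteristic function at $\pi$ with the binomial value $(1-2w/n)^i$, but the natural tools there (negative association, Hoeffding-type sampling-without-replacement inequalities) yield only one-sided bounds on $\ee(-1)^W$ and do not seem to produce the sharp constant $C=1$.
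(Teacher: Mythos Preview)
The paper does not prove this lemma; it is quoted as \cite[Lemma 2.4]{polyanskiy2019hypercontractivity} and used as a black box. So there is no ``paper's own proof'' to compare against, and any self-contained argument you give is additional content relative to the paper.

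That said, your sketch contains a concrete gap in the first ``crude max-on-contour'' step. You claim that on the circle $z=r e^{i\theta}$ the modulus $|1-z|^{i}|1+z|^{n-i}$ is maximized at $\theta=0$ whenever $i\le n/2$, with the sign of the $\theta$-derivative ``dictated by $(n-2i)$''. This is false. Writing $u=\cos\theta$, the $u$-derivative of the log-modulus is
\[
r\Big[\frac{n-i}{1+2ru+r^2}-\frac{i}{1-2ru+r^2}\Big],
\]
which at $u=1$ equals $r\big[(n-i)/(1+r)^2-i/(1-r)^2\big]$ and is \emph{negative} once $i/(n-i)>((1-r)/(1+r))^2$; in particular it is negative at $i=n/2$ for every $r\in(0,1)$. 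At $i=n/2$ the modulus on the circle is in fact maximized at $\theta=\pi/2$, where it equals $(1+r^2)^{n/2}$, exponentially larger (in $n$) than the value $(1-r^2)^{n/2}$ at $\theta=0$. So the crude max step, for $i$ near $n/2$, gives a bound that is off by an exponential factor, and the subsequent Laplace refinement---which only corrects a $\sqrt n$---cannot rescue it.

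The fix is not to estimate by the modulus on the circle $|z|=r^\ast$ at all, but to run a genuine steepest-descent argument: $r^\ast$ is the real saddle of the full complex log-integrand, and you must deform the contour onto the steepest-descent path through $r^\ast$, on which the integrand \emph{is} dominated by its value at the saddle. This is also what underlies the proof in the cited reference. Your hypergeometric representation $K_w(i)/\binom{n}{w}=\ee(-1)^W$ with $W\sim\mathrm{Hypergeom}(n,w,i)$ is another viable route---one can compare moment generating functions of sampling with and without replacement in the spirit of Hoeffding to get a two-sided bound, not merely one-sided---but, as you note, squeezing out $C=1$ that way is delicate.
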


\subsection{Code smoothing}
Here, we introduce the primary technique used in the reduction from decoding to LPN, known as code smoothing. 
\begin{definition}\label{def:smoothing}
    We say that a random vector $Z$ (or distribution $P_Z$) $\varepsilon$-smooths a code $\cC$ if $\dtv(P_{X_\cC+Z}, P_{U_n}) \leq \varepsilon$.
\end{definition}

A convenient way of characterizing the $\varepsilon$-smoothing of an $[n,k]$ linear code $\cC$ is given in the next lemma.
\begin{lemma}\label{lemma:smoothing}
    A random vector $Z$ (or a distribution $P_Z$) $\varepsilon$-smooths a code $\cC$ if and only if
    \begin{align*}
        \dtv (P_{\hH Z},P_{U_{n-k}}) \leq \varepsilon,
    \end{align*}
    where $\hH$ is the parity check matrix of $\cC$.
\end{lemma}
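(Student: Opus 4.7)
The plan is to establish the equality $\dtv(P_{X_\cC+Z}, P_{U_n}) = \dtv(P_{\hH Z}, P_{U_{n-k}})$ directly, from which the ``if and only if'' is immediate. The key observation is that both distributions $P_{X_\cC+Z}$ and $P_{U_n}$ are uniform on the cosets of $\cC$ (the fibers of $\hH$), so the TV distance reduces to the TV distance between their syndrome marginals. No inequality is needed; the equality falls out of simple bookkeeping.

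First I would compute $P_{X_\cC+Z}(x)$ explicitly. Writing it as a convolution,
\begin{equation*}
P_{X_\cC+Z}(x) = \frac{1}{|\cC|}\sum_{c\in\cC} P_Z(x-c),
\end{equation*}
and noting that $\{x-c : c\in\cC\} = x+\cC = \hH^{-1}(\hH x)$, we see that this sum is $P_{\hH Z}(s)$ with $s:=\hH x$, so
\begin{equation*}
P_{X_\cC+Z}(x) = \frac{1}{|\cC|}\,P_{\hH Z}(\hH x).
\end{equation*}
In other words, the distribution depends on $x$ only through $\hH x$, is uniform on each coset, and its pushforward under $\hH$ is exactly $P_{\hH Z}$ (this also matches the direct fact that $\hH(X_\cC+Z)=\hH Z$ since $\hH X_\cC=0$). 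By the same token, $P_{U_n}(x) = 2^{-n} = \frac{1}{|\cC|}\,P_{U_{n-k}}(\hH x)$, since $|\cC|=2^k$ and $\hH$ pushes the uniform distribution on $\ff_2^n$ forward to the uniform distribution on $\ff_2^{n-k}$.

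Next I would plug these two expressions into the formula \eqref{eq:dtv} for TV distance. Grouping the sum over $x\in\ff_2^n$ by syndrome $s=\hH x$ (each fiber has exactly $|\cC|=2^k$ elements), the prefactor $1/|\cC|$ cancels the size of the fiber, giving
\begin{equation*}
\dtv(P_{X_\cC+Z},P_{U_n}) = \frac12\sum_{x\in\ff_2^n}\frac{1}{|\cC|}\bigl|P_{\hH Z}(\hH x)-P_{U_{n-k}}(\hH x)\bigr| = \frac12\sum_{s\in\ff_2^{n-k}}\bigl|P_{\hH Z}(s)-P_{U_{n-k}}(s)\bigr|,
\end{equation*}
which is exactly $\dtv(P_{\hH Z},P_{U_{n-k}})$. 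The characterization in the lemma then follows by comparing both sides to $\varepsilon$.

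There is no real obstacle here; the only thing to be careful about is making sure the fiber-counting is right (so that the factor of $|\cC|$ on top cancels the factor of $|\cC|$ in the denominator), and to verify that the pushforward of the uniform distribution on $\ff_2^n$ under a full-rank $\hH$ is indeed uniform on $\ff_2^{n-k}$ — which is immediate since every coset of $\cC$ has the same size.
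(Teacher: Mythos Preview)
Your proof is correct and follows essentially the same approach as the paper: both establish the equality $\dtv(P_{X_\cC+Z},P_{U_n})=\dtv(P_{\hH Z},P_{U_{n-k}})$ by observing that $P_{X_\cC+Z}$ is constant on cosets of $\cC$ and doing the fiber-counting to collapse the sum over $\ff_2^n$ to one over syndromes. The only cosmetic difference is direction---the paper writes $P_{\hH Z}(u)=2^k P_{X_\cC+Z}(y_u)$ and sums over $u$, while you write the equivalent $P_{X_\cC+Z}(x)=\frac{1}{|\cC|}P_{\hH Z}(\hH x)$ and sum over $x$---but the argument is the same.
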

\begin{proof}
By a straightforward calculation,
    \begin{align*}
        P_{\hH Z}(u) = P_{\hH (X_\cC+Z)}(u) = \sum_{y: \hH y =u}P_{X_\cC+Z}(y) = 2^k P_{X_\cC+Z}(y_u),
    \end{align*}
    where $y_u \in \ff_2^n $ is any vector satisfying $\hH y_u = u$.

To prove the lemma we will show that
    \begin{align}\label{eq: smooth_eq}
         \dtv (P_{\hH Z},P_{U_{n-k}}) = \dtv(P_{X_\cC+Z}, P_{U_n}).
    \end{align}
Indeed,
    \begin{align*}
        2 \dtv (P_{\hH Z},P_{U_{n-k}})
        &= \sum_{u \in \ff_2^{n-k}}\Big|P_{\hH(X_\cC+Z)}(u)-\frac{1}{2^{n-k}}\Big|\\
        &= \sum_{u \in \ff_2^{n-k}}\Big|2^kP_{X_\cC+Z}(y_u)-\frac{1}{2^{n-k}}\Big|\\
        &= \frac{1}{|\cC|}\sum_{y \in \ff_2^{n}}\Big|2^kP_{X_\cC+Z}(y)-\frac{1}{2^{n-k}}\Big|\\
        &= \sum_{y \in \ff_2^{n}}\Big|P_{X_\cC+Z}(y)-\frac{1}{2^{n}}\Big|\\
        & = 2\dtv(P_{X_\cC+Z}, P_{U_n}).
    \end{align*}
The proof is complete. \qedhere
\end{proof}

In the next section, we argue that the reduction procedure can be formulated using the characterization given in this lemma. However, in our calculation, we primarily rely on Definition~\ref{def:smoothing} as it offers a deeper insight into the problem.

\begin{remark}\label{remark: gen math} Papers \cite{brakerski2019worst} and \cite{yu2021smoothing} used a slightly different definition of code smoothing, saying that  $Z$ smooths a code $\cC$ if $P_{\gG Z} \approx P_{U_{k}}$. It is not hard to see that this version of the definition leads to equivalent conclusions.

A characterization of smoothing related to Lemma ~\ref{lemma:smoothing} was used to quantify uniformity
guarantees for linear hashing in \cite{pathegama2024r}. It was also used in \cite{yan2024} to
establish a reduction from LPN to the average-case decoding problem.
\end{remark}

The following result is well known.

\begin{lemma}\label{lem: bias}
    Let $e \in \ff_2^n$ and $Z\sim P_Z$. The bias of the bit $e^{\intercal}Z$ is given by $2^{n-1}\hat{P}_Z(e).$
\end{lemma}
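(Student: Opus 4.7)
The plan is to translate $\bias(e^\intercal Z) = 1/2 - \Pr(e^\intercal Z = 1)$ into an expectation of the character $(-1)^{e^\intercal Z}$, and then recognize that expectation as a value of the Fourier transform $\widehat{P}_Z$ evaluated at $e$.

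First, I would note that since $(-1)^{e^\intercal Z}$ equals $+1$ on the event $\{e^\intercal Z = 0\}$ and $-1$ on $\{e^\intercal Z = 1\}$, we have
\begin{equation*}
\ee\big[(-1)^{e^\intercal Z}\big] = 1 - 2\Pr(e^\intercal Z = 1) = 2\Big(\tfrac{1}{2} - \Pr(e^\intercal Z = 1)\Big) = 2\,\bias(e^\intercal Z).
\end{equation*}
This is the standard identification of the bias of a Bernoulli random variable with its character expectation and requires only the definition $\bias(\xi) = 1/2 - \delta$ introduced in Section~\ref{sec:notation}.

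Next, I would apply the definition of the Fourier transform in \eqref{eq:ft} to the pmf $P_Z$ at the point $e$:
\begin{equation*}
\widehat{P}_Z(e) = \frac{1}{2^n}\sum_{z \in \ff_2^n} P_Z(z)\,(-1)^{e^\intercal z} = \frac{1}{2^n}\,\ee\big[(-1)^{e^\intercal Z}\big].
\end{equation*}
Combining the two displays gives $\bias(e^\intercal Z) = 2^{n-1}\widehat{P}_Z(e)$, which is the claim.

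There is no real obstacle here; the only thing to be careful about is the bookkeeping of the factor of $2$ (one from converting $1 - 2\delta$ into $2\,\bias$, and one from the $2^n$ normalization in the Fourier transform), which combine to give the $2^{n-1}$ in the final formula.
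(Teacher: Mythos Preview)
Your proof is correct and follows essentially the same approach as the paper: both compute $\bias(e^\intercal Z)$ by expressing it through the character sum $\sum_z P_Z(z)(-1)^{e^\intercal z}$ and then identifying this with $2^n\widehat{P}_Z(e)$. The only cosmetic difference is that you package the sum as an expectation $\ee[(-1)^{e^\intercal Z}]$, whereas the paper writes out $\Pr[e^\intercal Z=0]-\Pr[e^\intercal Z=1]$ explicitly via indicator functions before collapsing it.
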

\begin{proof} We have
     \begin{align*}
        \bias(e^{\intercal}Z) &=\frac{\Pr[e^{\intercal}Z = 0]-\Pr[e^{\intercal}Z = 1]}{2}\\ 
        &=\frac{\sum_{z \in \ff_2^n }P_Z(z)\1_{\{e^{\intercal}z=0\}}-\sum_{z \in \ff_2^n }P_Z(z)
        \1_{\{e^{\intercal}z=1\}}}{2}\\
        &=\frac{\sum_{z \in \ff_2^n }P_Z(z)(-1)^{e^{\intercal}z}}{2}\\
        &= 2^{n-1}\hat{P}_Z(e).\qedhere
    \end{align*}
\end{proof}

\section{Limitations of the smoothing-based reduction}\label{sec: limit}
In this section, we show that the decoding-to-LPN reduction is feasible if the distance to uniformity $\dtv(P_{\gG Z},P_{U_k}) $ declines slowly as a function of $k$ and is not if it declines faster than $1/\poly(k)$. We start with the impossibility result.

\subsection{Positive rate, linear error weight} \label{sec:positive-rate}

The main result of this section, Thm.~\ref{thm: main_finite}, provides a bound for $\bias(e^\intercal Z)$ in terms of $\dtv(P_{\gG Z},P_{U_k}) $, 
error weight $w$, and the dual distance of the code. Using this, we show that for code sequences of growing length $n$ and positive rate,  and error weight scaling 
linearly with $n$, it is impossible to achieve a meaningful reduction if $\dtv(P_{\gG Z},P_{U_k}) = 
\negl(k)$. 

\new{The proof of Theorem~\ref{thm: main_finite} will be accomplished in several steps. Our starting point is Lemma~\ref{lem: bias}, which expresses the bias in terms of the Fourier coefficients of the smoothing distribution.
The first step of the proof, Lemma~\ref{lem: dual_bound}, a necessary condition on the Fourier coefficients for
the total variation distance to be small. Bounds on the Krawtchouk polynomials play a role here. Together, these two results
can be manipulated to yield the desired bound. 

We will begin with a technical lemma. Below $B(z,t):=\cup_{i=0}^t S(z,i)$ and $V_n(t):=|B(z,t)|=\sum_{j=0}^t\binom nj.$}

\begin{lemma}\label{lem: flatness}
    Let $\cC_0$ be a code of length $n$ and distance $d$. Let $\varepsilon>0$. Let $\rho$ be a  pmf on $\ff_2^n$ that satisfies
    \begin{align*}
        \dtv(P_{\cC_0} \ast \rho, P_{U_n}) \leq \varepsilon.
    \end{align*}
    Then 
    $$
        \sum_{x: |x|\leq t}\rho(x) 
         \leq \frac{|\cC_0|V_n({t})}{2^n}+\varepsilon
    $$
and
    $$
        \sum_{x: |x|\geq n- t}\rho(x) 
         \leq \frac{|\cC_0|V_n({t})}{2^n}+\varepsilon,
    $$
where ${t}=\lfloor\frac{d-1}{2}\rfloor$.
    
\end{lemma}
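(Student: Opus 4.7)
The plan is to apply the hypothesis $\dtv(P_{\cC_0}\ast\rho,P_{U_n})\le\varepsilon$ at a carefully chosen event $A\subset\ff_2^n$ and turn the TV bound into the desired mass estimate on $\rho$. For the first inequality I would take
\[
A \;=\; \bigcup_{c\in\cC_0} B(c,t).
\]
Since $d\ge 2t+1$, the balls $B(c,t)$ are pairwise disjoint, so $|A|=|\cC_0|V_n(t)$ and $P_{U_n}(A)=|\cC_0|V_n(t)/2^n$. The hypothesis then gives
\[
(P_{\cC_0}\ast\rho)(A)\;\le\;\frac{|\cC_0|V_n(t)}{2^n}+\varepsilon,
\]
and the whole proof reduces to producing a matching lower bound $(P_{\cC_0}\ast\rho)(A)\ge \rho(B(0,t))=\sum_{|x|\le t}\rho(x)$.

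To get that lower bound I would expand $(P_{\cC_0}\ast\rho)(A)=\tfrac{1}{|\cC_0|}\sum_{c\in\cC_0}\sum_{y\in A}\rho(y-c)$ and, for each fixed $c\in\cC_0$, retain only the summands with $y$ in the single ball $B(c,t)\subseteq A$. A substitution $z=y-c$ shows that this restricted inner sum equals $\rho(B(0,t))$ independently of $c$, so averaging over $c$ preserves the value and yields the desired bound. Note that this argument uses only the packing property of $\cC_0$; no linearity or assumption $0\in\cC_0$ is needed.

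For the second inequality the idea is identical, but with the set translated by the all-ones vector $\mathbf 1$: take
\[
A' \;=\; \mathbf 1+A \;=\; \bigcup_{c\in\cC_0} B(c+\mathbf 1,\,t).
\]
Translation preserves pairwise Hamming distances, so these balls are still disjoint and $P_{U_n}(A')=|\cC_0|V_n(t)/2^n$. The same ``keep one ball per codeword'' manipulation, now with $y\in B(c+\mathbf 1,t)$ and substitution $z=y-c$, gives $(P_{\cC_0}\ast\rho)(A')\ge \rho(B(\mathbf 1,t))$. Because $|x-\mathbf 1|=n-|x|$, the set $B(\mathbf 1,t)$ is exactly $\{x:|x|\ge n-t\}$, so the TV bound applied to $A'$ closes the argument.

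The only step that requires a moment's thought is choosing the set $A$; once that is in place, the packing bound handles the $P_{U_n}$-side and the one-line substitution handles the $(P_{\cC_0}\ast\rho)$-side. I do not anticipate any real obstacle beyond identifying this $A$ and its shifted version $A'$.
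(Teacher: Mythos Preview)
Your proposal is correct and essentially identical to the paper's proof: the paper takes the same set $\cB=\bigcup_{z\in\cC_0}B(z,t)$, bounds $P_{U_n}(\cB)$ via disjointness of the packed balls, applies the TV inequality, and obtains the lower bound on $(P_{\cC_0}\ast\rho)(\cB)$ by retaining only the term $x\in B(c,t)$ for each $c$. For the second inequality the paper simply declares the argument ``completely analogous''; your shift by $\mathbf 1$ is a clean way to spell that out.
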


\begin{proof}
    Let us prove the first statement. Let $\cB = \bigcup_{z\in \cC_0}B(z,t)$.   Then by \eqref{eq: dtv-max}
     $$
(P_{\cC_0} \ast \rho)(\cB)-P_{U_n}(\cB)\le \dtv(P_{\cC_0} \ast \rho, P_{U_n}) \leq \varepsilon
    $$
Since the balls $B(z,t), z \in \cC_0$ are pairwise disjoint, this implies that
    \begin{align}\label{eq: isol}
        (P_{\cC_0} \ast \rho)(\cB) \leq \frac{V_n({t})|\cC_0|}{2^n}+\varepsilon.
    \end{align}
    Further
    \begin{align*}
         (P_{\cC_0} \ast \rho)(\cB)
         &=  \sum_{x \in \cB}\frac{1}{|\cC_0|}\sum_{c \in \cC_0}\rho(x-c)\\
        &=  \frac{1}{|\cC_0|}\sum_{c \in \cC_0}\sum_{x \in \cB}\rho(x-c)\\
         &\geq  \frac{1}{|\cC_0|}\sum_{c \in \cC_0}\sum_{x \in B(c,t)}\rho(x-c)\\
         &= \sum_{x: |x|\leq t}\rho(x).
    \end{align*}
Combining this with \eqref{eq: isol} yields the first statement. The proof of the second statement is completely analogous.
\end{proof}

Before proceeding to the next lemma, we introduce the following notation, which will be used throughout this section.
Let $d^\bot$ be a value of the dual distance and let $${t}^\bot=\Big\lfloor\frac{d^\bot-1}2\Big\rfloor.$$ Let
\begin{align*}
    I_{n,d^\bot} &=  \Big\{0,1,\dots,{t}^\bot\Big\} \cup \Big\{n-{t}^\bot,n-{t}^\bot+1,\dots,n\Big\}\\
    I_{n,d^\bot}^\complement &= \ion\setminus I_{n,d^\bot}.
\end{align*}

\begin{lemma}\label{lem: dual_bound}
Let $c$ and $C$ be the constants defined in Lemma \ref{lem: kbound}.    Let $\cC$ be a code of length $n$ and dual distance $d^\bot <n/2$. Let $w$ be an integer satisfying $w \leq c n$. Let $\rho$ be a pmf on $\ff_2^n$ satisfying  
    \begin{align*}
        \dtv(P_{\cC^\bot} \ast \rho, P_{U_n}) \leq \varepsilon \quad(\varepsilon>0).
    \end{align*}
    Then 
    \begin{align*}
        \frac{1}{\binom{n}{w}}\sum_{x:|x|=w}2^n\hat{\rho}(x) \leq \frac{|\cC^\bot|V_n({t}^\bot)}{2^{n-1}} + C n \Big(1-\frac{2w}{n}\Big)^{t^\bot} +2\varepsilon.
    \end{align*}
\end{lemma}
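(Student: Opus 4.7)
The plan is to convert $\sum_{x:|x|=w}2^n\hat\rho(x)$ into a Krawtchouk-weighted sum over the radial part of $\rho$ and then control this sum by splitting the weight range into $I_{n,d^\bot}$ (where the Krawtchouk bound of Lemma~\ref{lem: kbound} is weak but $\rho$ puts little mass, thanks to smoothing) and its complement (where the Krawtchouk bound is sharp).

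First, I would apply Parseval's identity \eqref{eq:ft Parceval} with $f=\rho$ and $g=K_w$, together with \eqref{ft: sphere} which identifies $\widehat{K_w}=\1_{S(0,w)}$. Using the convention $K_w(x)=K_w(|x|)$, this produces the identity
\begin{align*}
\sum_{x:|x|=w}2^n\hat\rho(x) \;=\; \sum_{i=0}^n K_w(i)\,\rho_i,
\end{align*}
where $\rho_i:=\sum_{x:\,|x|=i}\rho(x)$. The problem then reduces to bounding $\binom{n}{w}^{-1}\sum_{i=0}^n K_w(i)\rho_i$ from above.

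Next I would split the sum over $\{0,1,\dots,n\}=I_{n,d^\bot}\cup I_{n,d^\bot}^\complement$. For $i\in I_{n,d^\bot}^\complement$ I invoke Lemma~\ref{lem: kbound}. If $t^\bot<i\le n/2$, the lemma directly gives $|K_w(i)|/\binom{n}{w}\le C(1-2w/n)^{i}\le C(1-2w/n)^{t^\bot}$. If $n/2< i<n-t^\bot$, I would use the elementary symmetry $K_w(n-i)=(-1)^w K_w(i)$ (obtained by the substitution $j\mapsto w-j$ in the defining sum of $K_w$) so that the bound applies at the argument $n-i\in(t^\bot,n/2]$, again yielding $|K_w(i)|/\binom{n}{w}\le C(1-2w/n)^{t^\bot}$. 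Summing the uniform bound over the at most $n$ indices in $I_{n,d^\bot}^\complement$ (using $\rho_i\le 1$) bounds the bulk contribution by $Cn(1-2w/n)^{t^\bot}$.

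For $i\in I_{n,d^\bot}$ I appeal to Lemma~\ref{lem: flatness} applied to $\cC_0=\cC^\bot$ (whose minimum distance is $d^\bot$) and the pmf $\rho$. The hypothesis $\dtv(P_{\cC^\bot}\ast\rho,P_{U_n})\le\varepsilon$ then gives both $\sum_{i\le t^\bot}\rho_i\le|\cC^\bot|V_n(t^\bot)/2^n+\varepsilon$ and $\sum_{i\ge n-t^\bot}\rho_i\le|\cC^\bot|V_n(t^\bot)/2^n+\varepsilon$. Combined with the trivial bound $|K_w(i)|\le K_w(0)=\binom{n}{w}$, the boundary contribution is at most $|\cC^\bot|V_n(t^\bot)/2^{n-1}+2\varepsilon$. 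Adding the bulk and boundary estimates yields the stated inequality. The main obstacle is the coordination of these two regimes: the Krawtchouk bound of Lemma~\ref{lem: kbound} degenerates to the trivial bound precisely at the weight shells near $0$ and $n$, and the only way to control the sum there is to use the smoothing hypothesis to force $\rho$ to avoid those shells — which is the exact content of Lemma~\ref{lem: flatness}.
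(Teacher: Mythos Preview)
Your proposal is correct and follows essentially the same approach as the paper: convert the Fourier sum to a Krawtchouk-weighted sum via Parseval and \eqref{ft: sphere}, split over $I_{n,d^\bot}$ and $I_{n,d^\bot}^\complement$, apply Lemma~\ref{lem: flatness} with the trivial bound $|K_w(i)|\le\binom{n}{w}$ on the boundary, and apply Lemma~\ref{lem: kbound} together with the symmetry $K_w(n-i)=(-1)^wK_w(i)$ on the bulk. The paper's proof is line-for-line the same strategy.
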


\begin{proof}
    Using \eqref{eq:ft Parceval} and \eqref{ft: sphere} and viewing $\1_{S(0,w)}$ as a function in the Fourier domain, we have
    \begin{align}
        \frac{2^n}{\binom{n}{w}}\sum_{x:|x|=w}\hat{\rho}(x) 
        & = \frac{2^n}{\binom{n}{w}}\sum_{x \in \ff_2^n}\hat{\rho}(x)\1_{S(0,w)}(x)\nonumber\\ 
        & = \frac{1}{\binom{n}{w}}\sum_{y \in \ff_2^n}\rho(y)K_w(y)\nonumber\\
         & \leq \sum_{i\in I_{n,d^\bot}}\frac{|K_w(i)|}{\binom{n}{w}}\sum_{y:|y|=i}\rho(y) 
        + \sum_{i\in I_{n,d^\bot}^\complement}\frac{|K_w(i)|}{\binom{n}{w}}\sum_{y:|y|=i}\rho(y).
        \label{eq:I}
    \end{align}
    Let us bound the first sum:
    \begin{align}
        \sum_{i\in I_{n,d^\bot}}\frac{|K_w(i)|}{\binom{n}{w}}\sum_{y:|y|=i}\rho(y) 
        &\leq \sum_{i \in I_{n,d^\bot}}\sum_{y:|y|=i}\rho(y) \nonumber\\
        &= \sum_{y:|y|        
        \leq t^\bot}\rho(y) + \sum_{y:|y| \geq n-t^\bot}\rho(y)  \nonumber\\
        & \leq \frac{2|\cC^\bot|V_n({t}^\bot)}{2^n}+2\varepsilon, \label{eq: 1st}    
    \end{align}
    where the last inequality is due to Lemma \ref{lem: flatness}. Now let us bound the second sum 
in \eqref{eq:I}.

From Lemma \ref{lem: kbound} and from the fact that $K_w(i)=(-1)^wK_w(n-i)$, for $w \leq c n$ and $i \in \ion$ we have 
\begin{align}\label{eq: Kbound}
        \frac{|K_w(i)|}{\binom{n}{w}} \leq C\Big(1-\frac{2w}{n}\Big)^{\min\{i,(n-i)\}}.
    \end{align}

Therefore, 
    \begin{align*}
        \sum_{i\in I_{n,d^\bot}^\complement}\frac{|K_w(i)|}{\binom{n}{w}}\sum_{y:|y|=i}\rho(y) 
        & \leq C\sum_{i \in I_{n,d^\bot}^\complement}\Big(1-\frac{2w}{n}\Big)^{\min\{i,(n-i)\}}\sum_{y:|y|=i}\rho(y) \nonumber\\
        & \leq C\sum_{i \in I_{n,d^\bot}^\complement}\Big(1-\frac{2w}{n}\Big)^{\min\{i,(n-i)\}} \nonumber\\
        & \leq C|I_{n,d^\bot}|\max_{i \in I_{n,d^\bot}^\complement}\Big(1-\frac{2w}{n}\Big)^{\min\{i,(n-i)\}} \nonumber\\
        & \leq C n \Big(1-\frac{2w}{n}\Big)^{t^\bot}.
     \end{align*}    
Combining this with \eqref{eq: 1st} concludes the proof.
\end{proof}

\begin{theorem}\label{thm: main_finite}
    Let $\cC$ be an $[n,k,d]$ code with dual distance $d^\bot$ and generator matrix $\gG$. Let $w$ be a non-negative integer satisfying $w \leq c n$.
    Let $Z\sim P_Z$ be a random vector defined on $\ff_2^{n}$ that satisfies 
    $\dtv(P_{\gG Z},P_{U_k}) \leq \varepsilon, \varepsilon>0$. 
    Then there exists a vector $e \in \ff_2^{n}$ of weight $w$ such that 
    \begin{align}\label{eq: main1}
        |\bias(e^\intercal Z)|  \leq \sqrt{\frac{|\cC^\bot|V_n({t}^\bot)}{2^{n+1}}} 
        + \frac{\sqrt{Cn}}{2}\Big(1-\frac{2w}{n}\Big)^{{t}^\bot/2} 
        +\sqrt{\frac{\varepsilon}{2}}.
    \end{align}
\end{theorem}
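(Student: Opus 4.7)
The plan is to turn the hypothesis into a smoothing statement about the dual code, apply Lemma~\ref{lem: dual_bound}, and then extract a bound on an individual bias via an averaging argument. Concretely, since $\gG$ is a parity-check matrix for $\cC^\bot$, Lemma~\ref{lemma:smoothing} converts the hypothesis $\dtv(P_{\gG Z},P_{U_k})\le \varepsilon$ into
\[
\dtv(P_{\cC^\bot}\ast P_Z,\,P_{U_n})\le \varepsilon,
\]
i.e.\ $P_Z$ itself $\varepsilon$-smooths $\cC^\bot$. This is exactly the hypothesis of Lemma~\ref{lem: dual_bound}, but applying that lemma directly to $\rho=P_Z$ only controls the signed sum $\sum_{|x|=w}2^n\hat{P}_Z(x)$, which is insufficient to bound $|\bias(e^\intercal Z)|$ since Fourier coefficients can be negative.

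The key step I would use is to pass to the self-convolution $\rho:=P_Z\ast P_Z$, i.e.\ the law of $Z_1+Z_2$ for independent $Z_1,Z_2\sim P_Z$. Since $U_n+Z_2\sim U_n$, convolving preserves the smoothing bound: by the data-processing argument, $\dtv(P_{\cC^\bot}\ast \rho,P_{U_n})\le \varepsilon$. The benefit is Fourier-analytic: by \eqref{eq:ft conv} we have $\hat{\rho}(x)=2^n \hat{P}_Z(x)^2\ge 0$, so Lemma~\ref{lem: dual_bound} applied to $\rho$ gives
\[
\frac{1}{\binom{n}{w}}\sum_{x:|x|=w} 2^{2n}\hat{P}_Z(x)^2
\le \frac{|\cC^\bot|V_n({t}^\bot)}{2^{n-1}}+Cn\Big(1-\frac{2w}{n}\Big)^{t^\bot}+2\varepsilon,
\]
and Lemma~\ref{lem: bias} converts the left-hand side into $4\binom{n}{w}^{-1}\sum_{|x|=w}\bias(x^\intercal Z)^2$.

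Dividing by $4$ and applying an averaging argument over vectors of weight exactly $w$ produces some $e\in \ff_2^n$ with $|e|=w$ satisfying
\[
\bias(e^\intercal Z)^2 \le \frac{|\cC^\bot|V_n({t}^\bot)}{2^{n+1}}+\frac{Cn}{4}\Big(1-\frac{2w}{n}\Big)^{t^\bot}+\frac{\varepsilon}{2}.
\]
Taking square roots and using $\sqrt{a+b+c}\le \sqrt{a}+\sqrt{b}+\sqrt{c}$ for $a,b,c\ge 0$ gives \eqref{eq: main1}. The only non-routine ingredient is recognizing the self-convolution trick; once one writes down Lemma~\ref{lem: dual_bound} it is easy to get stuck trying to bound a signed average of Fourier coefficients, but replacing $P_Z$ by $P_Z\ast P_Z$ converts that signed average into an average of squares, and from there averaging plus subadditivity of $\sqrt{\cdot}$ finishes the argument.
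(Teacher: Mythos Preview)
Your proposal is correct and follows essentially the same route as the paper: convert the hypothesis to $\varepsilon$-smoothing of $\cC^\bot$ via Lemma~\ref{lemma:smoothing}, pass to the self-convolution $\rho=P_Z\ast P_Z$ so that $\hat\rho\ge 0$, apply Lemma~\ref{lem: dual_bound}, and finish by averaging and subadditivity of $\sqrt{\cdot}$. The only cosmetic difference is that the paper inserts a Cauchy--Schwarz step (bounding the $L^1$ average of $|2^n\hat P_Z(x)|$ by its $L^2$ average) before invoking the sum-of-squares bound, whereas you go directly from ``min of squares $\le$ average of squares''; your route is marginally cleaner but yields the identical inequality.
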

\begin{remark}
    It is not difficult to see that distributions $P_Z$ satisfying $\dtv(P_{\gG Z},P_{U_k}) \leq \varepsilon$ exist for any $\varepsilon>0$.
\end{remark}
\begin{proof}
    Setting $\rho = P_{Z} \ast P_{Z}$ and applying the data processing inequality\footnote{The data processing inequality applies to any $f$-divergence $D_f(P\|Q):={\mathbb E}_Q\Big[f\Big(\frac{dP}{dQ}\Big)
\Big]$. In particular, the total 
    variation distance between two measures is an $f$-divergence with $f(s)=\frac12|s-1|$. The inequality
says that if $P_X,Q_X$ are two measures on $X$, then $D_f(P_X\|Q_X)$ decreases if $P_X$ and $Q_X$ are ``passed''
through a conditional distribution $P_{Y|X}$ for some set $Y$. In our case $X=Y=\ff_2^n$, $P_X=P_{\cC^\bot} \ast \rho$, $Q_X=P_{U_n}$, and $P_{Y|X=x}=P_Z(\cdot-x)$.}
    \begin{align}
        \dtv (P_{\cC^\bot} \ast \rho, P_{U_n})& = \dtv( P_{\cC^\bot}\ast P_{Z}\ast P_{Z},P_{U_n}\ast P_{Z})
        \leq  \dtv (P_{\cC^\bot} \ast P_{Z}, P_{U_n}). \label{eq: slef_smooth}
    \end{align}
    Here we used the fact that $P_{U_n}=P_{U_n}\ast P_{Z}$ for any distribution $P_Z.$
    Now, from \eqref{eq: smooth_eq},
    \begin{align*}
      \dtv(P_{\cC^\bot} \ast P_Z,P_{U_{n}}) = \dtv(P_{X_{\cC^\bot} +Z},P_{U_{n}}) = \dtv(P_{\gG Z},P_{U_k}) \leq \varepsilon.
    \end{align*}
    Therefore, we have 
    \begin{align*}
        \dtv (P_{\cC^\bot} \ast \rho, P_{U_n}) \leq \varepsilon.
    \end{align*}
    From Lemma \ref{lem: dual_bound} we have 
    \begin{align}\label{eq: sum2}
         \frac{2^n}{\binom{n}{w}}\sum_{x:|x|=w}\hat{\rho}(x) 
        & \leq \frac{|\cC^\bot|V_n({t}^\bot)}{2^{n-1}} + C n \Big(1-\frac{2w}{n}\Big)^{t^\bot} +2\varepsilon.
    \end{align}
    From \eqref{eq:ft conv}, we have $\hat{\rho}(x) = 2^n\hat{P}_Z(x)^2$. Therefore,
    \begin{align*}
        \frac{1}{\binom{n}{w}}\sum_{x:|x|=w}(2^n\hat{P}_Z(x))^2 
        & \leq \frac{|\cC^\bot|V_n({t}^\bot)}{2^{n-1}} + C n \Big(1-\frac{2w}{n}\Big)^{t^\bot} +2\varepsilon.
    \end{align*}   
    Set $e = \argmin_{x: |x|=w}|2^n \hat{P}_{Z}(x)|$. Then 
    \begin{align*}
        |\bias(e^TZ)| &\ =  |2^{n-1}\hat{P}_{Z}(e)|   \hspace*{1in}(\text{Lemma }\ref{lem: bias})\\
        &\leq \frac{1}{2}\Big(\frac{1}{\binom{n}{w}}\sum_{x: |x|=w}|2^n\hat{P}_Z(x)|\Big)\\
        &\leq \frac{1}{2}\Big(\frac{1}{\binom{n}{w}}\sum_{x: |x|=w}|2^n\hat{P}_Z(x)|^2\Big)^{1/2} \quad\text{(Cauchy-Schwarz)}\\
        & \leq \frac{1}{2}\Big(\frac{|\cC^\bot|V_n({t}^\bot)}{2^{n-1}} 
        + C n \Big(1-\frac{2w}{n}\Big)^{t^\bot}
        + 2\varepsilon\Big)^{1/2}\\
        & = \Big(\frac{|\cC^\bot|V_n({t}^\bot)}{2^{n+1}} 
        + \frac{C n}{4} \Big(1-\frac{2w}{n}\Big)^{t^\bot}
        + \frac{\varepsilon}{2}\Big)^{1/2}\\
        &\leq \sqrt{\frac{|\cC^\bot|V_n({t}^\bot)}{2^{n+1}}} 
        + \frac{\sqrt{Cn}}{2}\Big(1-\frac{2w}{n}\Big)^{{t}^\bot/2} 
        +\sqrt{\frac{\varepsilon}{2}}. \qedhere
    \end{align*}
\end{proof}

\vspace*{.05in}

\new{\begin{remark} \label{remark: average}
This theorem in fact shows that not only the bias of $e^\intercal Z$ satisfies inequality \eqref{eq: main1}  for the worst error vector of weight $w$ but also the average (absolute) bias among all such error vectors satisfies this inequality. This observation implies that
even a {\em randomized} decoding-to-LPN reduction based on code smoothing is unlikely.
\end{remark}
}

Ideally, one would expect to reduce a decoding instance with linear error weight to an LPN problem instance. 
If this were possible, it would demonstrate that LPN is at least as hard as decoding a generic linear code, \new{for which the worst
case is a computationally hard problem}. However, as we will show next, this is impossible to achieve for linear codes with large dual distance unless $\dtv(P_{\gG Z},P_{U_k})$ is assumed to be very small. 

\begin{corollary}\label{cor: main2}
    Let $R \in (0,1)$ and let $\omega < c$, where $c$ is as in Lemma~\ref{lem: kbound}. 
For any sequence of $[n_j,k_j]$ linear codes $\cC_j, j=1,2,\dots$ of increasing length $n_j$ such that ${k_j}/{n_j}\to R$ 
and $d(\cC_j^\bot)/n_j\to \partial^\bot>0$ and any sequence of random vectors $(Z_{j})_{j\ge 1}$ defined on $\ff_2^{n_j}$, there exists a sequence of vectors $e_j$ with $|e_j|/n_j\to \omega$ such that the following holds true:
    \begin{enumerate}[(i)]
        \item If $\dtv(P_{\gG_jZ_j},P_{U_{k_j}}) = \negl(k_j)$, then $\bias(e_j^{\intercal}Z_j) = \negl(k_j)$ 
        \item If $\dtv(P_{\gG_jZ_j},P_{U_{k_j}}) = 2^{-\Omega(k_j)}$, then $\bias(e_j^{\intercal}Z_j) = 2^{-\Omega(k_j)}$, 
    \end{enumerate}
    where $\gG_j$ is the generator matrix of $\cC_j$.
\end{corollary}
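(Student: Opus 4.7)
My plan is to invoke Theorem~\ref{thm: main_finite} with the choice $w_j = \lfloor \omega n_j \rfloor$ (so that $w_j \le c n_j$ for all sufficiently large $j$) and analyze the three summands on the right-hand side of~\eqref{eq: main1} separately. I will denote them $A_j$, $B_j$, $C_j$ and set $\varepsilon_j = \dtv(P_{\gG_j Z_j}, P_{U_{k_j}})$. Since $k_j = \Theta(n_j)$ whenever $R \in (0,1)$, it will suffice to prove decay of each summand in $n_j$; passing to $k_j$ is automatic.

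The summand $C_j = \sqrt{\varepsilon_j/2}$ is controlled directly by the hypothesis on $\varepsilon_j$ in each of the two cases. For $B_j = (\sqrt{Cn_j}/2)(1-2w_j/n_j)^{t_j^\bot/2}$, the assumption $\omega < c \le 1/2$ will keep $1 - 2w_j/n_j$ bounded below by a positive constant while $t_j^\bot \sim \partial^\bot n_j/2$ grows linearly, so $B_j = O(\sqrt{n_j})\,(1 - 2\omega + o(1))^{\Theta(n_j)} = 2^{-\Omega(n_j)}$ in both cases.

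For $A_j = \sqrt{|\cC_j^\bot| V_{n_j}(t_j^\bot)/2^{n_j+1}}$ I will first note that the asymptotic Plotkin bound applied to the dual codes (whose rates tend to $1-R \in (0,1)$) forces $\partial^\bot \le 1/2$, so $t_j^\bot/n_j \to \partial^\bot/2 \le 1/4$. The standard entropy estimate $V_n(t) \le 2^{nH(t/n)}$ for $t \le n/2$ then gives
$$A_j^2 \le 2^{n_j(H(\partial^\bot/2) - R) + o(n_j)}.$$
The key obstacle will be establishing strict negativity of this exponent. The Hamming bound applied to the dual sequence yields only the weak inequality $R \ge H(\partial^\bot/2)$ and, on its own, merely $A_j \le 1/\sqrt{2}$, a constant, which is useless. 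To produce exponential decay I will invoke the LP (MRRW) upper bound for binary codes: for any $\partial^\bot > 0$ the maximum asymptotic rate of binary codes of relative distance $\partial^\bot$ is strictly less than $1 - H(\partial^\bot/2)$. Applied to the dual sequence this forces $R > H(\partial^\bot/2)$ strictly, so $A_j = 2^{-\Omega(n_j)}$. Combining the three estimates yields, in case~(i), $|\bias(e_j^\intercal Z_j)| \le \negl(k_j) + 2\cdot 2^{-\Omega(k_j)} = \negl(k_j)$, and in case~(ii) all three summands are $2^{-\Omega(k_j)}$, proving the claim.
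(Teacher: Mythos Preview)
Your proposal is correct and follows essentially the same route as the paper's proof: apply Theorem~\ref{thm: main_finite} with $w_j=\lfloor\omega n_j\rfloor$, show that the first two summands of~\eqref{eq: main1} decay like $2^{-\Omega(n_j)}$ (the first via the MRRW bound to rule out equality in the Hamming bound, the second by direct calculation), and let the third summand $\sqrt{\varepsilon_j/2}$ carry the hypothesis. Your write-up is in fact more explicit than the paper's---you spell out the Plotkin step to ensure $t_j^\bot/n_j\le 1/4$ so the entropy estimate applies, and you state the exponent $H(\partial^\bot/2)-R$ explicitly---but the argument is the same.
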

\begin{proof} 
The bound on the right-hand side of \eqref{eq: main1} includes three terms. If the code rate stays positive as $n_j$ increases, the Hamming (sphere packing) bound implies that the term  $\sqrt{\frac{|\cC^\bot|V_n({t}^\bot)}{2^{n+1}}}$ decays exponentially with $k$. \new{This follows because for rates $R\in(0,1)$ no code sequence can attain the Hamming bound with equality since 
there are tighter bounds on the rate-distance tradeoff \cite{mceliece1977new}.} 
The second term also declines exponentially. Indeed 
\begin{equation}\label{eq:Omega}
   \frac{\sqrt{C n}}{2}\Big(1-\frac{2w}{n}\Big)^{{t}^\bot/2}
   = O\big(\sqrt{n}(1-2\omega )^{\partial^\bot n /4}\big)
   = 2^{-\Omega(n)},
\end{equation}   
where we have omitted the subscripts $j$.
   Thus the behavior of the bias for $j\to\infty$ is governed by the total variation distance $\dtv(P_{\gG_jZ_j},P_{U_{k_j}})$ as described in (i), (ii) in the statement.
\end{proof}

\begin{remark}
(a) Impossibility of reduction in Corollary \ref{cor: main2} is proved under the assumption of (normalized) error weight $\omega$ bounded by $c$. This is an artifact of the proof rather than an essential limitation. Indeed, the condition $\omega<c$ arises because we rely on a simple 
bound for Krawtchouk polynomials, the estimate \eqref{eq: Kbound}. Using more refined bounds such as those given
in \cite{KalaiLinial1995}, \cite[Lemma 2.2]{polyanskiy2019hypercontractivity} makes it possible to show that a meaningful reduction is unlikely for larger error weights as well. We omit the details.

(b) Just as the restriction on the error weight in Corollary \ref{cor: main2} is not essential, so is the assumption of the linearly growing dual distance. While typical random linear codes satisfy it
with high probability, we can in fact relax it, assuming instead that the dual distance of the codes
scales as $d(\cC_j^\bot) = \Omega(n_j^\alpha)$ for some $\alpha \in (0,1)$. Substituting this
into \eqref{eq:Omega}, we still find that the bias of $e^\intercal Z$ declines as $2^{-\Omega(n^\alpha)}$.
We note that there are abundantly many code sequences with dual distance satisfying these constraints, for instance, with high probability random codes are among them.
\end{remark}

We close this section with a remark concerning the range of parameters for the LPN reduction. We have shown that the assumption $\dtv(P_{\gG_jZ_j},P_{U_{k_j}}) = \negl(k)$
makes meaningful reductions with linear error weight impossible. This leaves the option of reducing decoding to LPN with errors of sublinear weight. We argue that if the weight satisfies $w=o(k)$, where $k$ is the dimension of the code, then a meaningful reduction is possible. At the same time, as noted earlier, this implication does not contribute toward the desired hardness proof of LPN.

According to Theorem \ref{thm: main_finite}, given that $\dtv(P_{\gG Z},P_{U_{k}}) = \negl(k)$, a necessary condition for $\bias(e^\intercal Z) = \Omega(1/\poly(k))$ is 
$$\frac{\sqrt{Cn}}{2}\Big(1-\frac{2w}{n}\Big)^{{t}^\bot/2} = \Omega(1/\poly(k)).$$ 
For codes whose dual distance $d^\bot$ scales linearly with the block length, the above condition is equivalent to $w = O(\log(k))$. We state this as another corollary whose proof is immediate.
\begin{corollary}\label{cor: main3}
     Let $R,\partial^\bot \in (0,1)$, and $l \in \integers^+$. Let $\cC_j$ be a sequence of $[n_j,k_j]$ linear codes of increasing length $n_j$ such that ${k_j}/{n_j}\to R$ and $d(\cC_j^\bot)/{n_j}\to \partial^\bot$. Denote by $\gG_j$ the generator matrix of $\cC_j$. For all $j$, let $e_j \in \ff_2^{n_j}$ be a vector satisfying $|e_j| = w_j$. If $\dtv(P_{\gG_jZ_j},P_{U_{k_j}}) = \negl(k_j)$, then the following holds true: 
        $$\text{ if }\bias(e_j^\intercal Z_j) = \Omega(1/\poly(k)),\text{ then }w_j = O(\log(k_j)).$$
\end{corollary}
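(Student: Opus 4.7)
The plan is to invoke Theorem~\ref{thm: main_finite} with $\varepsilon_j = \dtv(P_{\gG_j Z_j}, P_{U_{k_j}})$ and then identify which of the three terms on the right-hand side of \eqref{eq: main1} can possibly account for an $\Omega(1/\poly(k_j))$ lower bound on $|\bias(e_j^\intercal Z_j)|$. Two of them cannot: the third term $\sqrt{\varepsilon_j/2}$ is $\negl(k_j)$ by hypothesis, and the first term $\sqrt{|\cC_j^\bot| V_{n_j}(t_j^\bot)/2^{n_j+1}}$ is $2^{-\Omega(n_j)}$ by exactly the sphere-packing argument invoked in the proof of Corollary~\ref{cor: main2}, applied to code sequences of positive rate $R<1$ with linear dual distance. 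Consequently, if $\bias(e_j^\intercal Z_j) = \Omega(1/\poly(k_j))$, then for all sufficiently large $j$ the middle term must itself satisfy
\begin{align*}
\frac{\sqrt{C n_j}}{2}\Big(1 - \frac{2 w_j}{n_j}\Big)^{t_j^\bot/2} \;\geq\; \frac{1}{\poly(k_j)}.
\end{align*}

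Next I would take logarithms of both sides, apply the elementary inequality $-\log(1-x) \geq x$ valid on $x \in (0,1)$, and recall that $n_j = \Theta(k_j)$ because $k_j/n_j \to R \in (0,1)$, so that the factor $\log\sqrt{Cn_j/2}$ is absorbed into $O(\log k_j)$. This rearranges cleanly to
\begin{align*}
\frac{t_j^\bot}{n_j}\cdot w_j \;\leq\; O(\log k_j).
\end{align*}
Since $t_j^\bot/n_j \to \partial^\bot/2 > 0$ by the linear dual distance assumption, this forces $w_j = O(\log k_j)$, which is exactly the claimed conclusion.

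Since the corollary is really a bookkeeping consequence of Theorem~\ref{thm: main_finite}, I do not anticipate a substantive obstacle. The only mild subtlety worth flagging is ensuring that the final algebraic step is uniform over the possible scalings of $w_j$: regardless of whether $w_j/n_j$ stays small or approaches a constant below $c$, the bound $-\log(1-x) \geq x$ applies over the entire range and gives the same conclusion, whereas the first-order approximation $-\log(1-x) \approx x$ would only justify the small-$w_j/n_j$ regime.
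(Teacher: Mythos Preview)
Your proposal is correct and essentially identical to the paper's argument: the paper states that the corollary is immediate from the discussion preceding it, which invokes Theorem~\ref{thm: main_finite}, observes that the first and third terms of \eqref{eq: main1} are negligible (by sphere-packing and by hypothesis, respectively), and concludes that the middle term being $\Omega(1/\poly(k))$ forces $w = O(\log k)$ under the linear dual-distance assumption. You simply supply the explicit logarithmic step that the paper leaves to the reader.
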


It is interesting to note that the result of Corollary~\ref{cor: main3} is essentially tight. Namely,
assuming Bernoulli noise, Proposition 5.1 in \cite{debris2022worst} implies that $w = \Theta(\log(k))$ is sufficient to achieve a meaningful reduction. \new{At the same time, finding the closest codeword to the received vector in a ball of radius $w=\Theta(\log n)$ is a computationally easy task, so it has no implications for the hardness of LPN.}

\subsection{Slow Smoothing allows Reduction}\label{eq:large distance}
In Section~\ref{sec:positive-rate}, we studied the case where the error weight scales linearly with the block length, with the additional assumption that $\dtv(P_{\gG_jZ_j},P_{U_{k_j}})$ decays fast. Here we show that
lifting this assumption makes it possible to obtain a meaningful reduction.

\begin{theorem}\label{prop: ach}
    Let $R \in (0,1)$, $\omega \in (0,1/2)$, and $l \in \integers^+$. Let $\cC_j$ be a sequence of $[n_j,k_j]$ linear codes of increasing length $n_j$ such that ${k_j}/{n_j}\to R$. Let $\gG_j$ be a generator matrix of $\cC_j$ and let $e_j \in \ff_2^{n_j}$ be a vector satisfying $|e_j| = \lfloor\omega n_j\rfloor$. Then there exists a sequence of distributions $P_{Z_j}$ satisfying the following conditions:
    \begin{enumerate}[label=\normalfont(\roman*)]
        \item $\dtv(P_{\gG_j Z_j, e_j^{\intercal}Z_j},P_{U_{k_j}}P_{e_j^{\intercal}Z_j}) = O(k_j^{-l})$,
        \item $\bias(e_j^\intercal Z_j ) = \Omega(k_j^{-l})$.
    \end{enumerate}
\end{theorem}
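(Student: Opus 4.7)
The plan is to construct $P_{Z_j}$ explicitly so that the joint distribution in~(i) is \emph{exactly} a product with uniform first marginal, while the bias in~(ii) is tuned to match the target $k_j^{-l}$. The construction uses two independent ingredients. Let $H_{0,j} := \{z \in \ff_2^{n_j} : e_j^\intercal z = 0\}$, take $X_j \sim \mathrm{Unif}(H_{0,j})$, and take $Y_j \sim \Ber(p_j)^{\otimes n_j}$ independent of $X_j$, where $p_j \in (0,1/2)$ is fixed by $(1 - 2p_j)^{w_j} = 2\, k_j^{-l}$. For $w_j = \lfloor \omega n_j\rfloor$ linear in $n_j$ and $k_j/n_j \to R$ this gives $p_j = \Theta(\log k_j / n_j) \in (0,1/2)$ for all sufficiently large $j$. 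Set $Z_j := X_j + Y_j$; the role of $X_j$ is to kill $e_j^\intercal Z_j$ and force $\gG_j Z_j$ to be uniform, while the role of $Y_j$ is to re-inject a controllable bias into $e_j^\intercal Z_j$.

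Condition~(ii) is immediate: since $X_j \in H_{0,j}$ gives $e_j^\intercal X_j = 0$, we have $e_j^\intercal Z_j = e_j^\intercal Y_j$, a mod-$2$ sum of $w_j$ independent $\Ber(p_j)$ bits whose bias equals $(1 - 2p_j)^{w_j}/2 = k_j^{-l}$ (alternatively, by Lemma~\ref{lem: bias} applied to the factored Fourier transform of $P_{Z_j}$). For~(i), assuming $e_j \notin \cC_j$ we have $\cC_j^\bot \not\subset H_{0,j}$, so $\cC_j^\bot \cap H_{0,j}$ is a codimension-$1$ subspace of $\cC_j^\bot$ of dimension $n_j - k_j - 1$. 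Hence $\gG_j(H_{0,j})$ has dimension $(n_j - 1) - (n_j - k_j - 1) = k_j$, meaning $\gG_j X_j$ is uniform on $\ff_2^{k_j}$. Consequently $\gG_j Z_j = \gG_j X_j + \gG_j Y_j$ is uniform on $\ff_2^{k_j}$ (convolving a uniform with anything returns a uniform), and conditioning on any realization of $Y_j$ leaves $\gG_j Z_j$ uniform as a shift of $\gG_j X_j$. Thus $\gG_j Z_j$ is independent of $Y_j$ and a fortiori of $e_j^\intercal Z_j = e_j^\intercal Y_j$, giving $P_{\gG_j Z_j,\, e_j^\intercal Z_j} = P_{U_{k_j}} \cdot P_{e_j^\intercal Z_j}$ exactly, so~(i) holds with TV distance $0$.

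The only genuine obstacle in the scheme is the assumption $e_j \notin \cC_j$ used in the dimension calculation. If $e_j \in \cC_j$, writing $e_j = \gG_j^\intercal m_0$ with $m_0 \ne 0$ shows that $e_j^\intercal Z_j = m_0^\intercal \gG_j Z_j$ is a deterministic linear function of $\gG_j Z_j$, so the joint is supported on the $k_j$-dimensional graph $\{(u,m_0^\intercal u)\} \subset \ff_2^{k_j+1}$ and one can check that $\dtv(P_{\gG_j Z_j, e_j^\intercal Z_j},P_{U_{k_j}} P_{e_j^\intercal Z_j}) \ge 1/2$ whenever $\bias(e_j^\intercal Z_j) = o(1)$; thus~(i) and~(ii) are genuinely incompatible in that degenerate case. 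The theorem is therefore to be understood as covering the generic case $e_j \notin \cC_j$, which is the only situation of interest for the decoding problem. Modulo this caveat, the proof is complete once the two verifications above are written out.
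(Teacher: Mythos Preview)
Your construction is correct and genuinely different from the paper's. The paper proves the theorem via Lemma~\ref{lemma: distance-bias}, taking the \emph{mixture} distribution $P_Z=(1-\gamma)P_{U_n}+\gamma\,\1_{S(0,1)}/n$ with $\gamma=k_j^{-l}$; it then bounds $\dtv(P_{\gG_e Z},P_{U_{k+1}})$ via the smoothing characterization~\eqref{eq: smooth_eq} and a case split on $d(\cC_e^\bot)$, obtaining TV distance $\le\gamma(3/2-|e|/n)$ and bias $=\tfrac{\gamma}{2}(1-2|e|/n)$. Your approach instead tailors the distribution to $e_j$ (uniform on the hyperplane $e_j^\perp$ convolved with Bernoulli noise) and obtains TV distance \emph{equal to zero} with a short linear-algebra argument, which is cleaner and strictly stronger. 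The price you pay is that your $P_{Z_j}$ depends on $e_j$, whereas the paper's distribution is oblivious to the error vector; for the theorem as stated this is immaterial, but the paper's choice is more in keeping with the motif of a single ``universal'' smoothing distribution discussed in the introduction.

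Your caveat about $e_j\in\cC_j$ is well taken and your argument that conditions (i)--(ii) are then incompatible is correct: the joint is supported on the graph $\{(u,m_0^\intercal u)\}$, which has product-measure $1/2$, forcing TV distance $\ge 1/2$. It is worth noting that the paper's own proof has the same hidden hypothesis: the step $\dtv(P_{\gG_e Z},P_{U_{k+1}})=\dtv(P_{X_{\cC_e^\bot}+Z},P_{U_n})$ invokes~\eqref{eq: smooth_eq}, which requires $\gG_e$ to have full row rank $k+1$, i.e., $e\notin\cC$. So you have not introduced a new assumption but rather made explicit one that was already implicit. As a minor remark, the Bernoulli layer $Y_j$ is not strictly needed for the theorem as stated: taking $Z_j=X_j$ alone gives bias $1/2=\Omega(k_j^{-l})$ and TV distance $0$; your $Y_j$ serves only to tune the bias down to the target scale $k_j^{-l}$.
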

This theorem follows directly from the next lemma upon setting $\gamma=k_j^{-l}$.
\begin{lemma}\label{lemma: distance-bias}
    Let $\cC\in \ff_2^n$ be linear code satisfying $|\cC| < {2^{n-1}}/V_n(2)$ and let 
$e \in \ff_2^n\setminus\{0\},|e|<n/2$. Then for any $\gamma >0$, the following conditions are simultaneously achievable:
    \begin{enumerate}[label=\normalfont(\roman*)]
        \item $\dtv(P_{\gG Z, e^{\intercal}Z},P_{U_k}P_{e^{\intercal}Z}) \leq \gamma \big(\frac{3}{2}-\frac{|e|}{n}\big)$,
        \item $\bias(e^{\intercal}Z) =\frac{\gamma}{2}\big(1-\frac{2|e|}{n}\big)$.
    \end{enumerate}
\end{lemma}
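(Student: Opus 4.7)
The plan is to construct $P_Z$ explicitly as a convex combination $P_Z = (1-\gamma)P_{U_n} + \gamma P_W$, where $W$ is distributed uniformly on the $n$ standard basis vectors of $\ff_2^n$. This is a valid probability distribution for any $\gamma\in(0,1]$.

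For condition~(ii), I would compute the bias by linearity over the mixture. The uniform component contributes $\bias(e^\intercal U_n)=0$. For the unit-vector component, $e^\intercal W$ equals the $I$-th coordinate of $e$ when $W=\bfe_I$ with $I$ uniform on $[n]$, so $\Pr[e^\intercal W = 1] = |e|/n$ and hence $\bias(e^\intercal W) = \tfrac12(1-2|e|/n)$. Summing yields $\bias(e^\intercal Z) = \tfrac{\gamma}{2}(1-2|e|/n)$, matching (ii) exactly.

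For condition~(i), the key idea is a cancellation that relies on $e$ lying outside the row span of $\gG$ (a role I expect the hypothesis $|\cC|<2^{n-1}/V_n(2)$ to secure—either directly, since it forces $\cC$ to occupy a tiny fraction of $\ff_2^n$, or by allowing a translate of the support of $W$ by a vector outside $\cC$ that preserves the bias calculation). Under that assumption, $(\gG U_n, e^\intercal U_n)$ is uniform on $\ff_2^{k+1}$, and so
\begin{align*}
 P_{\gG Z, e^\intercal Z} &= (1-\gamma)P_{U_{k+1}} + \gamma P_{\gG W, e^\intercal W},\\
 P_{U_k}P_{e^\intercal Z} &= (1-\gamma)P_{U_{k+1}} + \gamma P_{U_k}P_{e^\intercal W}.
\end{align*}
The $(1-\gamma)$-terms cancel, reducing the outer TV to $\gamma\,\dtv(P_{\gG W, e^\intercal W}, P_{U_k}P_{e^\intercal W})$.

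The final step is to bound this inner TV by the triangle inequality through $P_{U_{k+1}}$: the first leg is at most $1$ by the trivial bound on a total variation distance, and a direct computation shows the second leg equals $|\bias(e^\intercal W)| = \tfrac12(1-2|e|/n)$. Summing produces $\tfrac32 - |e|/n$, and multiplying by $\gamma$ yields~(i). The main delicacy will be the cancellation step: when $e$ happens to lie in $\cC$, one has $e^\intercal U_n$ equal to a deterministic function of $\gG U_n$ and the uniform components no longer match, producing a spurious $(1-\gamma)$-sized term. Pinning down how the size hypothesis on $\cC$ rescues this case—or, equivalently, how one modifies the support of $W$ so that the cancellation holds unconditionally—is the main obstacle to turning the sketch above into a complete proof.
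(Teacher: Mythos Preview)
Your choice $P_Z = (1-\gamma)P_{U_n} + \gamma\,\frac{\1_{S(0,1)}}{n}$ and the bias computation match the paper exactly. For the TV bound the overall shape is also close: the paper likewise passes through $P_{U_{k+1}}$ via the triangle inequality and computes $\dtv(P_{U_{k+1}}, P_{U_k}P_{e^\intercal Z}) = \bias(e^\intercal Z)$ for one leg. The difference is in the other leg. Where you cancel the uniform component first and bound the remaining $\gamma\,\dtv(P_{\gG_e W}, P_{U_{k+1}})$ trivially by $\gamma$, the paper instead invokes the smoothing equivalence $\dtv(P_{\gG_e Z}, P_{U_{k+1}}) = \dtv(P_{X_{\cC_e^\bot}+Z}, P_{U_n})$ from Lemma~\ref{lemma:smoothing} and evaluates the right-hand side directly. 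That evaluation splits into two cases: if $d(\cC_e^\bot)\ge 3$ the spheres $S(c,1)$, $c\in\cC_e^\bot$, are disjoint and a direct count gives $\le\gamma$; if $d(\cC_e^\bot)\le 2$ the paper uses the hypothesis $|\cC|<2^{n-1}/V_n(2)$---this is the \emph{only} place it appears---to conclude $|\cC_e^\bot|>V_n(2)$, extract a subcode of minimum distance $\ge 3$, and reduce to the first case by convexity. So the size hypothesis is not there to force $e\notin\cC$ as you speculate; it handles a packing degeneracy in the smoothing computation.

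Your route is genuinely shorter: under the assumption $e\notin\cC$ it bypasses the smoothing lemma and the case analysis entirely and never needs the size hypothesis. The gap you flag at the end---what to do when $e\in\cC$---is real for your cancellation step and is \emph{not} rescued by the size hypothesis, which says nothing about whether a particular vector lies in $\cC$. Note, however, that the paper's appeal to Lemma~\ref{lemma:smoothing} also requires $\gG_e$ to have full row rank $k+1$, i.e.\ $e\notin\cC$; so the paper's argument tacitly carries the same assumption, and neither proof as written treats that degenerate case.
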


\begin{proof}
    Let $P_Z = (1 -\gamma)P_{U_n} + \gamma \frac{\1_{S(0,1)}}{n}$. 
By Lemma \ref{lem: bias} and the definition of $K_w(i)$
      $$
    \bias(e^{\intercal}Z) = 2^{n-1} \hat{P}_Z(e) =\frac12\Big((1-\gamma)\1_{\{0\}}(e)+\frac \gamma nK_1(|e|)\Big)\stackrel{(e\ne 0)}=
    \frac{\gamma }{2}\Big(1-\frac{2|e|}{n}\Big).
    $$
Denote by $\cC_e$ the code generated by 
$\gG_e:=\stirlingI\gG{e^{\intercal}}$and note that $P_{\gG Z, e^{\intercal}Z}=P_{\gG_eZ}$. By the triangle inequality,
    \begin{align*}
        \dtv(P_{\gG_eZ},P_{U_k}P_{e^{\intercal}Z}) \leq \dtv(P_{U_{k+1}},P_{U_k}P_{e^{\intercal}Z}) + \dtv(P_{\gG_e Z},P_{U_{k+1}}).
    \end{align*}
To bound the first term, use \eqref{eq:dtv} to write
    \begin{align*}
        \dtv(P_{U_{k+1}},P_{U_k}P_{e^{\intercal}Z}) 
        & = \frac{1}{2}\sum_{x \in \ff_2^k,b \in \ff_2}|P_{U_{k+1}}(x,b)-P_{U_k}(x)P_{e^{\intercal}Z}(b)|\\
        & = \frac{1}{2}\sum_{x \in \ff_2^k}\big|\frac{1}{2^{k+1}}-\frac{1}{2^k}P_{e^{\intercal}Z}(0)\big| 
        + \frac{1}{2}\sum_{x \in \ff_2^k}\big|\frac{1}{2^{k+1}}
        - \frac{1}{2^k}P_{e^{\intercal}Z}(1)\big|\\
        &= \bias(e^{\intercal}Z)\\
        &= \frac{\gamma }{2}\big(1-\frac{2|e|}{n}\big).
    \end{align*}
    
    Now we are left to bound $\dtv(P_{\gG_e Z},P_{U_{k+1}})$. 
    From \eqref{eq: smooth_eq}, we have
    \begin{align*}
        \dtv(P_{\gG_e Z},P_{U_{k+1}}) = \dtv(P_{X_{\cC_e}+ Z},P_{U_{n}}).
    \end{align*}
    
We divide the analysis into two cases depending on the value of $d^\bot(\cC_e)$. First assume that
$d(\cC_e^\bot) \geq 3$. In this case, we obtain 
    \begin{align*}
        P_{X_{\cC_e^\bot}+Z} (x) = (1-\gamma)P_{U_n}(x) + \gamma \Big(\frac{\1_{S(0,1)}}{n} \ast P_{\cC_e^\bot}\Big)(x) = \frac{1-\gamma}{2^n}+\frac{\gamma \1_{\cS}(x)}{n |\cC_e^\bot|}, 
    \end{align*}
    where $\cS = \bigcup_{z \in \cC_e^\bot} S(z,1)$. Therefore,
    \begin{align*}
        \dtv(P_{X_{\cC_e^\bot}+Z}, P_{U_n}) 
        &= \frac{1}{2}\sum_{x \in \ff_2^n}\Big|(P_Z\ast P_{\cC_e^\bot})(x)-\frac{1}{2^n}\Big|\\
        &=\frac12\Big[\sum_{x \in \cS}\Big| \frac{1-\gamma}{2^n} 
        + \frac{\gamma}{n |\cC_e^\bot|}
        -\frac{1}{2^n}\Big|
        + \sum_{x \in \ff_2^n \setminus \cS}\Big| 
        \frac{1-\gamma}{2^n}-\frac{1}{2^n}\Big|\Big]\\
        &=\frac12\Big[n|\cC_e^\bot|\gamma\Big(\frac{1}{n|\cC_e^\bot|} -\frac{1}{2^n}\Big)
        + (2^n - n|\cC_e^\bot|)\frac{\gamma}{2^n}\Big]\\
        & =\gamma\Big(1-\frac{n|\cC_e^\bot|}{2^n}\Big)\\
        & \leq \gamma.
    \end{align*}
 
Now assume that $d(\cC_e^\bot) \in \{1,2\}$. Since $|\cC_e| \leq 2^n/V_n(2)$, $|\cC_e^\bot|>V_n(2)$, so 
there exists a nontrivial subcode $\cC^\prime\subset\cC_e^\bot$ such that $d(\cC^\prime)\geq 3$. By the 
above argument we have
    \begin{align*}
        \dtv(P_{X_{\cC^\prime}+Z}, P_{U_n}) \leq \gamma.
    \end{align*}
    Let $\mathcal{L}$ be the set of coset leaders in $\cC_e^\bot/\cC^\prime$.
    Then we can write $ P_{\cC_e^\bot} = \frac{1}{|\cC_e^\bot/\cC^\prime|}\sum_{l \in \mathcal{L}} P_{\cC^\prime + l}$. 
From the convexity of the total variation distance, we have
    \begin{align*}
        \dtv(P_{X_{\cC_e^\bot}+Z}, P_{U_n}) \leq \frac{1}{|\mathcal{L}|}\sum_{l \in \mathcal{L}}\dtv(P_{X_{\cC^\prime+l}+Z}, P_{U_n}) \leq \gamma.
    \end{align*}
    Therefore, $\dtv(P_{\gG Z, e^{\intercal}Z},P_{U_k}P_b) \leq \gamma + \frac{\gamma }{2}\big(1-\frac{2|e|}{n}\big) = \gamma \big(\frac{3}{2}-\frac{|e|}{n}\big)$.
Since $|e|<n/2$, in both cases condition (i) is satisfied. Thus, the proof is complete.
\end{proof}

\new {\begin{remark}\label{remark: nonbinary}
As pointed out by a reviewer, the results of this subsection extend without difficulty to nonbinary 
codes. We briefly outline the changes needed for this extension. Let $\Z_q$ be the additive group of integers
mod $q$. We consider additive codes over $\integers_q$, i.e., additive subgroups of $\Z_q^n$. Define 
a random variable $\xi$ with
$$
   \Pr(\xi=0)=1-\delta, \quad \Pr(\xi=s)=\frac{\delta}{q-1}, \;s\in \Z_q\backslash 0,
   $$
where $\delta\in(0,1)$. The noise vector $Z$ is formed of $n$ independent copies of $\xi$.
The definition of the bias is modified as follows:
  $$
  \bias(\xi)=\frac 1q-\frac{\delta}{q-1},
  $$
and as before, $\bias(e^\intercal Z)=q^{n-1}\hat P_Z(e)$, where the Fourier transform of a 
function $f:\integers_q\to {\mathbb R}$ is defined as
     $$
     \hat{f}(y)=\frac{1}{q^n}\sum _{x\in {\mathbb {Z}}_q^n} f(x)\exp\Big(\frac{2\pi i }{q}x^\intercal y\Big).
     $$
The statement of Lemma~\ref{lemma: distance-bias} is modified as follows.
\begin{lemma} Let $\cC\subset \Z_q^n$ be an additive code satisfying $|\cC|<q^{n-1}/V_n(2)$
and let $e\in Z_q^n\backslash\{0\}$, $|e|<(q-1)n/q$. Then for any $\gamma>0$ the following
conditions are simultaneously achievable:
\begin{enumerate}[label=\normalfont(\roman*)]
    \item $\dtv(P_{\gG Z, e^\intercal Z}, P_{U_k}P_{e^\intercal Z})\le \gamma(2-\frac 1q-\frac{|e|}n)$,
    \item $\bias(e^\intercal Z)=\gamma(\frac 1q-\frac{|e|}{(q-1)n})$.
\end{enumerate}
\end{lemma}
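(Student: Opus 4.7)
\end{lemma}

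\begin{proof}[Proof plan]
I would mimic the binary proof of Lemma~\ref{lemma: distance-bias} almost verbatim, substituting everywhere the $q$-ary analog of the relevant object. Concretely, set
\[
P_Z = (1-\gamma)\, P_{U_n} + \gamma \cdot \frac{\1_{S(0,1)}}{n(q-1)},
\]
where now $|S(0,1)| = n(q-1)$ in $\Z_q^n$. Using the $q$-ary version of Lemma~\ref{lem: bias}, $\bias(e^\intercal Z)=q^{n-1}\widehat{P}_Z(e)$, and the fact that $\widehat{P}_{U_n}$ is supported at $0$, the bias reduces to a scalar multiple of $\widehat{\1_{S(0,1)}}(e)$. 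A character-sum calculation using $\sum_{a\in\Z_q\setminus\{0\}}\exp(2\pi i ay/q) = -1$ for $y\ne 0$ gives $q^n\widehat{\1_{S(0,1)}}(e) = (q-1)n - q|e|$, the degree-one $q$-ary Krawtchouk polynomial; plugging in recovers $\bias(e^\intercal Z)=\gamma(\tfrac{1}{q}-\tfrac{|e|}{(q-1)n})$ on the nose, which is (ii).

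For (i) I would apply the triangle inequality
\[
\dtv(P_{\gG Z,e^\intercal Z},P_{U_k}P_{e^\intercal Z}) \leq \dtv(P_{U_{k+1}},P_{U_k}P_{e^\intercal Z}) + \dtv(P_{\gG_e Z},P_{U_{k+1}}),
\]
with $\gG_e$ obtained from $\gG$ by appending the row $e^\intercal$. The first distance collapses, after summing out the $\Z_q^k$ component, to $\dtv(P_{U_1},P_{e^\intercal Z})$. A direct counting argument on the sphere part of $P_Z$ (using, when $q$ is prime, that $a\mapsto ae_i$ is a bijection of $\Z_q\setminus\{0\}$ for each nonzero $e_i$) yields
\[
P_{e^\intercal Z}(0) = \tfrac{1-\gamma}{q}+\gamma\tfrac{n-|e|}{n},\qquad P_{e^\intercal Z}(t) = \tfrac{1-\gamma}{q}+\gamma\tfrac{|e|}{n(q-1)}\quad(t\neq 0),
\]
so under the assumption $|e|<(q-1)n/q$ the signs of $P_{e^\intercal Z}(0)-1/q$ and $P_{e^\intercal Z}(t)-1/q$ are definite, and the total variation distance evaluates to exactly $\gamma(1-\tfrac{1}{q}-\tfrac{|e|}{n})$.

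The second distance, $\dtv(P_{\gG_e Z},P_{U_{k+1}})=\dtv(P_{X_{\cC_e^\bot}+Z},P_{U_n})$ via Lemma~\ref{lemma:smoothing}, I would bound by the same case split on $d(\cC_e^\bot)$ as in the binary proof. When $d(\cC_e^\bot)\ge 3$, the spheres $S(z,1)$, $z\in\cC_e^\bot$, are pairwise disjoint, so $P_{\cC_e^\bot}\ast(\1_{S(0,1)}/(n(q-1)))$ is the normalized indicator of a set of size $n(q-1)|\cC_e^\bot|\le q^n$, and the direct calculation yields $\gamma(1-n(q-1)|\cC_e^\bot|/q^n)\le \gamma$. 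When $d(\cC_e^\bot)\in\{1,2\}$, the hypothesis $|\cC|<q^{n-1}/V_n(2)$ forces $|\cC_e^\bot|>V_n(2)$, so $\cC_e^\bot$ contains a codeword $c$ of weight $\ge 3$; for $q$ prime the cyclic subgroup $\cC'=\langle c\rangle$ is a nontrivial additive subcode with $d(\cC')\ge 3$, and writing $P_{\cC_e^\bot}$ as a uniform mixture of cosets of $\cC'$ and invoking convexity of $\dtv$ reduces to the previous case. Summing the two contributions gives $\gamma(1-1/q-|e|/n)+\gamma=\gamma(2-1/q-|e|/n)$, as claimed.

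The main obstacle I anticipate lies in the subcode construction when $q$ is composite: the cyclic subgroup $\langle c\rangle$ may contain an element $ic$ whose support is strictly smaller than that of $c$ (because some nonzero component of $c$ is a zero divisor annihilated by $i$), so one cannot appeal directly to the weight-$\ge 3$ codeword in $\cC_e^\bot$. Similarly, the exact formula for $P_{e^\intercal Z}$ used above breaks for composite $q$ when some $e_i$ is a zero divisor. Resolving this cleanly requires either restricting to codewords (and error vectors) all of whose nonzero components are units of $\Z_q$, whose abundance follows from a refined counting estimate against $V_n(2)$, or constructing $\cC'$ greedily as a larger additive subcode of distance $\ge 3$ via a Gilbert--Varshamov type argument inside $\cC_e^\bot$.
\end{proof}
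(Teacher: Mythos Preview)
Your proposal is correct and follows exactly the approach the paper itself indicates: the paper's proof of this $q$-ary lemma is only a one-line sketch stating that one takes $P_Z=(1-\gamma)P_{U_n}+\gamma\frac{\1_{S(0,1)}}{(q-1)n}$ and then ``proceeds with minimal changes'' from the binary Lemma~\ref{lemma: distance-bias}, noting that $q^n\widehat{\1_{S(0,1)}}(e)=K_1(|e|)=(q-1)n-q|e|$. Your detailed plan is precisely this, and your arithmetic is right; in particular you correctly observe that the first term of the triangle inequality is $\dtv(P_{U_1},P_{e^\intercal Z})=\gamma(1-1/q-|e|/n)$ rather than $\bias(e^\intercal Z)$---these two quantities coincide only when $q=2$, which is why the $q$-ary bound reads $\gamma(2-1/q-|e|/n)$ rather than $\gamma+\bias$.

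Your caveats about composite $q$ (the distribution of $e^\intercal W$ when some $e_i$ is a zero divisor, and the minimum distance of the cyclic subgroup $\langle c\rangle$) are genuine technical points that the paper's sketch does not address; they are not obstacles to the intended application, since restricting to prime $q$ (or to $\ff_q$ in place of $\Z_q$) removes them, and the paper's remark is not aiming for more than that.
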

The proof starts with the distribution $P_Z = (1 -\gamma)P_{U_n} + \gamma \frac{\1_{S(0,1)}}{(q-1)n}$ and
proceeds with minimal changes (recall that $\hat\1_{S(0,1)}(x)=K_1(x):=(q-1)n-q|x|$).

\vspace*{.1in}Note that the main results concerning the impossibility of the reduction, given in Sec.~\ref{sec:positive-rate}, do not extend in the same manner because they critically depend on the estimate of
Krawtchouk polynomials \eqref{eq: Kbound0}, which does not have an immediate $q$-ary analog.

\end{remark}
}
\new{\section{Concluding remarks}
Let us summarize the results of this work with regard to the decoding-to-LPN reduction based on code smoothing. Previous works 
\cite{brakerski2019worst,yu2021smoothing,debris2022worst} considered the case of zero-rate codes, showing that an efficient reduction
was possible. Expanding on these results, we examined the case $0<R<1$, isolating the parameter regimes for which such
a reduction is possible and the regimes for which it is not. In particular, a decoding-to-LPN reduction is not possible if the 
distribution of the syndromes of the dual code approaches a uniform distribution, namely the distance $\dtv(P_{\gG Z}, P_{U_k})$ 
decays faster than any inverse polynomial function of the code dimension $k$. At the same time, a reduction is possible if this 
distance behaves as $\Omega(\frac 1{\poly(k)})$, although in this case what we show is that LPN is as hard as
the worst-case decoding problem that operates with a non-negligible error rate.

Looking ahead, it would be of interest to extend the impossibility results of Sec.~\ref{sec:positive-rate} to nonbinary codes and to derive sharper estimates of the parameter regimes for which the reduction fails in the case of lattices and the Learning-with-Errors problem compared to \cite{regev2005lattices,debris2023smoothing}.}

\section*{Acknowledgments}
\new{We are grateful to the anonymous reviewers for a careful reading of our first draft and insightful comments. In particular,
Remarks \ref{remark: average} and \ref{remark: nonbinary} added in the final version are due to a reviewer.}

This research was partially supported by NSF grants CCF-2110113 (NSF-BSF), CCF-2104489, and CCF-2330909. This work was partly done when the authors were visiting the Simons Institute for the Theory of Computing in Spring 2024. The authors are grateful to Nicolas Resch for a useful 
discussion of the decoding-to-LPN reduction problem.

\bibliographystyle{abbrvurl}
\bibliography{smoothing}
\end{document}